\newtheorem{theorem}{Theorem}
\newtheorem{definition}[theorem]{Definition}
\newtheorem{proposition}[theorem]{Proposition}
\newtheorem{remark}[theorem]{Remark}
\newenvironment{proof}[1][Proof]{\noindent\textbf{#1.} }{\ \rule{0.5em}{0.5em}}
\begin{document}

\title{Footprints and footprint analysis for atmospheric dispersion problems}
\author{Niklas Br\"{a}nnstr\"{o}m and Leif \AA\ Persson \\
Swedish Defence Research Agency, FOI\\
SE-901 82 Ume\aa \\
Sweden}
\maketitle
\date{}
\tableofcontents

\begin{abstract}
Footprint analysis, also known as the study of Influence areas, is a first
order method for solving inverse atmospheric dispersion problems. We revisit
the concept of footprints giving a rigorous definition of the concept
(denoted posterior footprints and posterior zero footprints) in terms of
spatio-temporal domains. The notion of footprints is then augmented the to
the forward dispersion problem by defining prior footprints and prior zero
footprints. We then study how posterior footprints and posterior zero
footprints can be combined to reveal more information about the source, and
how prior footprints and prior footprints can be combined to yield more
information about the measurements.
\end{abstract}

\section{Introduction}

In inverse atmospheric dispersion problems the task is to use sensor data
(e.g. concentration readings of a pollutant) to characterise the source of
the pollutant. As inverse problems are usually hard due the problem being
over-determined and having non-unique solutions many different methods to
solve the problem have been suggested. A common feature of these different
methods,\ however, is the tendency to specialise on certain
parameterisations of the problem. That is, the methods are usually tailored
to the problem at hand. This is of course a natural path to follow when
solving a particular problem, but in \cite{BrannstromPersson} an alternative
view was presented: therein a measure theoretic framework for studying
inverse atmospheric dispersion problems was developed. While not solving any
particular inverse dispersion problem, the framework allows for general
conclusions to be drawn about general inverse dispersion problems. (If a
particular problem is to be studied, the framework can be suitably
parameterised to cope with the situation). The framework itself relies on a
measure theoretic description of the dispersion problem (and its adjoint)
which is reviewed in Section \ref{Section:SettingofProblem}. In \cite%
{BrannstromPersson} this approach was employed to derive necessary and
sufficient conditions for when the inverse can be solved using the method of
least-squares. In the current paper we will use the same approach to study
another method for solving inverse atmospheric dispersion problems:
footprint analysis (which is also known as 'area of influence'). In essence,
footprint analysis is a rough method for solving the inverse problem where
the information contents of each sensor measurement is superpositioned to
gain a rough idea of the characteristics of the unknown source function.

With the risk of oversimplifying it seems footprint analysis can be divided
into two parts, where the first is to establish a relationship between the
source and the sensor and the second to use it to solve the inverse problem,
footprint analysis. In atmospheric dispersion models the analysis of how the
source influences the sensor is called footprints \cite{KRS2002}. The ideas
to study footprints (but without the nomenclature of today's literature)
were introduced in \cite{Pasquill1972}. Since then a whole body of
literature has emerged, with the survey article \cite{Schmid2002} being a
good starting point. Initially the focus lay on finding 2D footprints, i.e.
areas where the part of the source that influences the sensor most is
located. From the start the problem was set in terrains that were flat and
smooth but later the emphasis has been placed on studying the phenomena over
rougher terrain, in particular forest canopies, \cite{Schmid1994}, \cite%
{FWY1995}, \cite{KRS2002}, \cite{RAKSMV2000}. In \cite{CL2007} the footprint
analysis is generalised to 3D footprints (three spatial dimensions). To our
knowledge there are no spatio-temporal studies of the footprint despite many
dispersion problems being set both in space and time. In the present paper
we augment the notion of footprints to the spatio-temporal setting. The term
footprint analysis (or 'area of influence') is also used to refer to a first
order method for solving the inverse problem of estimating source
parameters, see e.g. \cite{Pudykiewicz1998} and \cite{Robertson2004}. Here
the idea is that using the adjoint formulation of the dispersion model one
can compute "back trajectories" starting from each sensor and evolving
backwards in time. The source, one then concludes, is located where all
these back trajectories intersect. The two uses of the word footprint stems
from exactly the same idea, namely to describe where the source is located,
but in the case of trying to solve the inverse problem it is always the
adjoint formulation of the dispersion model that is being used, and
secondly, if several measurements are available then this information is put
to use to solve the inverse problem (by intersecting the back trajectories).
In this paper we build on these ideas and use the framework of \cite%
{BrannstromPersson} to put these concepts on a rigorous footing. Indeed, the
footprints referred to in the footprint literature are closely related to
what we will term posterior footprints, and footprint analysis, used as a
first order method for solving the inverse problem, makes use of both
posterior footprints and posterior zero footprints. The nomenclature
regarding footprints will be explained in Section \ref{Section:Footprints},
but first we have to revise the setting of the underlying problem.

\section{An atmospheric dispersion problem, and its adjoint formulation\label%
{Section:SettingofProblem}}

The atmospheric dispersion problem that we are interested in can be
formulated in terms of a transition probability $p(t,x;s,y)$, where $%
(s,y),(t,x)\in T\times V$ where $T\subset \mathbb{R}$ is a time interval and 
$V\subset \mathbb{R}^{3}$ is a spatial domain. The transition probability
expresses the probability for a particle released at the time-space point $%
(s,y)$ to reside in the time-space point $(t,x)$ for $t\geq s$. We note that 
$p=0$ when $t<s$. The particles whose dispersion is governed by this
transition probability is assumed to originate from a source $S$. The source 
$S$ is assumed to be a positive measure on $T\times V$. In this way the
total mass\ $M$ released from the source is given by integrating the source
measure $S$ over its support%
\begin{equation}
M=\int_{T}\int_{V}dS(s,y).
\end{equation}%
The quantity that is usually desired as output from a dispersion model is
the concentration of the pollutant in a given space-time point. Since $S$
has its support on $T\times V$ and the transition probability describes the
dynamics of the released substance the concentration $c(t,x)$ is obtained by
weighing all released particles (released at some $(s,y)$ with $s<t$) with
the probability that they have been transported from $(s,y)$ to $(t,x)$%
\begin{equation}
c(t,x)=\int_{T}\int_{V}p(t,x;s,y)dS(s,y).
\end{equation}%
While $c(t,x)$ is the predicted concentration at the space time point $(t,x)$
the sensor may not have the resolution to make an ideal measurement from the
concentration field $c(t,x)$, indeed the sensor may perform some form of
averaging in both space and time to yield the sensor response $\overline{c}%
(t,x)$. We assume that the averaging process in the sensor can be described
by a probability measure $S^{\ast }$ (usually referred to as the
sensor-filter function) on $T\times V$, and hence we express the sensor
response as%
\begin{equation}
\overline{c}=\int_{T}\int_{V}c(t,x)dS^{\ast }(t,x).
\end{equation}%
Let us now use the definition of $c(t,x)$ to rewrite this expression in the
following way%
\begin{eqnarray}
\overline{c} &=&\int_{T}\int_{V}c(t,x)dS^{\ast }(t,x)  \notag \\
&=&\int_{T}\int_{V}\int_{T}\int_{V}p(t,x;s,y)dS(s,y)dS^{\ast }(t,x) \\
&=&\int_{T}\int_{V}\left( \int_{T}\int_{V}p(t,x;s,y)dS^{\ast }(t,x)\right)
dS(s,y).  \notag
\end{eqnarray}%
By defining the \emph{adjoint concentration field }$c^{\ast }(s,y)$ as%
\begin{equation}
c^{\ast }(s,y)=\int_{T}\int_{V}p(t,x;s,y)dS^{\ast }(t,x)  \label{def_adjoint}
\end{equation}%
we get%
\begin{equation}
\overline{c}=\int_{T}\int_{V}c^{\ast }(s,y)dS(s,y).
\end{equation}%
Hence we have two equivalent ways of calculating the sensor response%
\begin{equation}
\overline{c}=\int_{T}\int_{V}c(t,x)dS^{\ast
}(t,x)=\int_{T}\int_{V}\int_{V}c^{\ast }(s,y)dS(s,y)
\end{equation}%
which is the dual relationship between the forward and the adjoint
description of the dispersion problem. We note that equation (\ref%
{def_adjoint}) describing the adjoint concentration field is evolving
backwards in time: we may view the transition probability as moving adjoint
particles released by $S^{\ast }$ backwards in time and space. The main
advantage of using the adjoint representation in inverse dispersion
modelling is computational efficiency. This is a well-documented fact, see
for example \cite{Marchuk1986}. We also remark that the adjoint
concentration field $c^{\ast }$ is independent of the source function $S$,
and the concentration field $c$ is independent of the sensor-filter function 
$S^{\ast }$.

To better model real world sensors we assume that all sensors have a
threshold value $\widetilde{c}_{\lim }$ (the threshold value depends on the
specific sensor) below which any sensor response $\overline{c}$ will be put
to zero%
\begin{equation*}
\overline{c}=\left\{ 
\begin{array}{cc}
\overline{c}, & \text{if }\overline{c}\geq \widetilde{c}_{\lim } \\ 
0, & \text{otherwise}%
\end{array}%
\right. .
\end{equation*}

\section{Detectability and Non-detectability}

The dispersion problem predicts how a pollutant from a source spreads in the
atmosphere. From an abstract point of view this problem can be seen as a
problem of mapping of measures: the source $S$ can be viewed as a measure in
the spatio-temporal domain $T\times V$ that is being mapped via the
dispersion equations into a scalar function $c$ (the concentration), from
which we make measurements represented by a probability measure $S^{\ast }$,
defining the averaging of the concentration function $c$. From this level of
abstraction the adjoint version of the problem is very similar. In this case
the adjoint equations maps a probability measure $S^{\ast }$ on $T\times V$
representing a measurement in a sensor to a scalar function $c^{\ast }$
(adjoint 'concentration') from which we can make "adjoint measurements"
using a source measure $S$ acting on the adjoint 'concentration' $c^{\ast }$%
. (Depending on the scaling of the problem the adjoint 'concentration' $%
c^{\ast }$ may not be a proper concentration dimensionally.) In view of this
light asking questions about the sensor response in the forward problem or
asking questions about the source in the inverse problem are very similar.

Before heading into footprints we begin by studying the notion of
detectability, that is, when a source can be detected by sensor measurements.

\begin{definition}
A measurement $S^{\ast }$ is said to have sensitivity $k$ at $\left(
s,y\right) \in T\times V$ if $c^{\ast }\left( s,y\right) \geq k$.
\end{definition}

\begin{definition}
A measurement $S$ is said to detect the source $S$ at detection level $%
c_{\lim }$ if $\left\langle S,c^{\ast }\right\rangle \geq c_{\lim }$.
\end{definition}

To connect detection level to sensitivity we must assume a minimum mass of
the source. It follows from these definitions that

\begin{proposition}
If an instantaneous point source of mass $M$ at $\left( s,y\right) $ is
detected by measurement $S^{\ast }$, then $S^{\ast }$ has sensitivity $%
k=c_{\lim }/M$. To detect an instantaneous point source with mass at least $%
M_{\min }$, a sensitivity of $c_{\lim }/M_{\min }$ is required at the source
location.
\end{proposition}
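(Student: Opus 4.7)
The plan is to exploit the measure-theoretic framework from Section \ref{Section:SettingofProblem} directly: model the instantaneous point source as a Dirac measure and evaluate the detection pairing explicitly. Specifically, an instantaneous point source of mass $M$ located at $(s,y)\in T\times V$ is naturally represented as the positive measure $S=M\delta_{(s,y)}$, which is admissible because the framework only requires $S$ to be a positive measure on $T\times V$. Its total mass is $\int_{T}\int_{V}dS = M$, as required.

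Next, I would evaluate the detection functional $\langle S,c^{\ast }\rangle =\int_{T}\int_{V}c^{\ast }(s',y')\,dS(s',y')$ against this particular source. Using the sifting property of the Dirac measure this collapses to $\langle S,c^{\ast }\rangle = Mc^{\ast }(s,y)$. The definition of detection at level $c_{\lim}$ then reads $Mc^{\ast }(s,y)\geq c_{\lim}$, which rearranges to $c^{\ast }(s,y)\geq c_{\lim}/M$. Invoking the definition of sensitivity at the point $(s,y)$ gives immediately that $S^{\ast}$ has sensitivity $k=c_{\lim}/M$ at $(s,y)$, proving the first assertion.

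For the second assertion I would argue by monotonicity in $M$. If the sensor is to detect every instantaneous point source at $(s,y)$ with mass at least $M_{\min}$, the detection inequality $Mc^{\ast }(s,y)\geq c_{\lim}$ must hold for all $M\geq M_{\min}$; since the left-hand side is increasing in $M$, the binding constraint is $M=M_{\min}$, yielding the necessary and sufficient condition $c^{\ast }(s,y)\geq c_{\lim}/M_{\min}$, i.e., sensitivity $c_{\lim}/M_{\min}$ at the source location.

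I do not anticipate a genuine obstacle: both claims reduce to one-line algebra once the Dirac representation is written down. The only point worth being careful about is the legitimacy of integrating $c^{\ast}$ against $\delta_{(s,y)}$, which requires $c^{\ast }$ to be well defined pointwise at $(s,y)$; this is granted by equation (\ref{def_adjoint}), which defines $c^{\ast}$ pointwise on $T\times V$ through the transition probability and the sensor-filter measure $S^{\ast}$.
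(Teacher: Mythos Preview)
Your proposal is correct and matches the paper's treatment: the paper presents this proposition as an immediate consequence of the two preceding definitions (sensitivity and detection), and your argument simply spells out that immediacy by writing $S=M\delta_{(s,y)}$ and reading off $\langle S,c^{\ast}\rangle=Mc^{\ast}(s,y)$.
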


A sensor detects on sensitivity level $k$ by weighting the concentration
field $c$ in a spatio-temporal neighbourhood of the sensor using the measure 
$S^{\ast }$. We state the some properties of this measurement in general
terms in the following theorem.

\begin{theorem}
\label{TheoremBackward}Assume that $S$ is a positive measure on a domain $%
D\subseteq \mathbb{R}^{n}$ and $c^{\ast }$ a nonnegative measurable function
on $D$, integrable with respect to $S$. Then for all $k\geq 0$%
\begin{equation}
kS\left\{ c^{\ast }\geq k\right\} +\int_{\left\{ c^{\ast }<k\right\}
}c^{\ast }dS\leq \int c^{\ast }dS  \label{lo_lim1}
\end{equation}%
and%
\begin{equation}
\int c^{\ast }dS\leq kS\left\{ c^{\ast }\leq k\right\} +\int_{\left\{
c^{\ast }>k\right\} }c^{\ast }dS  \label{hi_lim1}
\end{equation}%
Moreover, if there is equality in (\ref{lo_lim1}) then $S\left\{ c^{\ast
}>k\right\} =0$ (i.e., $c^{\ast }=k$, $S$--almost everywhere on $\left\{
c^{\ast }\geq k\right\} $). Finally, if there is equality in (\ref{hi_lim1}%
), then $S\left\{ c^{\ast }<k\right\} =0$ (i.e., $c^{\ast }=k$, $S$--almost
everywhere on $\left\{ c^{\ast }\leq k\right\} $).
\end{theorem}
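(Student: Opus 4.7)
The plan is to prove both inequalities by decomposing the integral $\int c^{\ast}\,dS$ according to the level $k$ and bounding each piece by monotonicity of the Lebesgue integral. Since $S$ is a positive measure and $c^{\ast}$ is nonnegative and integrable, all the pieces below are well defined.

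First I would establish (\ref{lo_lim1}). Write
\begin{equation*}
\int c^{\ast}\,dS = \int_{\{c^{\ast}\geq k\}} c^{\ast}\,dS + \int_{\{c^{\ast}<k\}} c^{\ast}\,dS,
\end{equation*}
which is valid because $\{c^{\ast}\geq k\}$ and $\{c^{\ast}<k\}$ partition $D$. On the set $\{c^{\ast}\geq k\}$ the integrand dominates the constant $k$, so $\int_{\{c^{\ast}\geq k\}} c^{\ast}\,dS \geq kS\{c^{\ast}\geq k\}$, and substituting gives (\ref{lo_lim1}). For (\ref{hi_lim1}) I would decompose instead as
\begin{equation*}
\int c^{\ast}\,dS = \int_{\{c^{\ast}\leq k\}} c^{\ast}\,dS + \int_{\{c^{\ast}>k\}} c^{\ast}\,dS
\end{equation*}
and use $c^{\ast}\leq k$ on $\{c^{\ast}\leq k\}$ to get $\int_{\{c^{\ast}\leq k\}} c^{\ast}\,dS \leq kS\{c^{\ast}\leq k\}$. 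These are routine applications of integral monotonicity and nothing deeper is needed.

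For the equality statements, the key observation is the elementary fact that if $f\geq 0$ is measurable and $\int_{E} f\,dS = 0$ then $f=0$ $S$--almost everywhere on $E$. Equality in (\ref{lo_lim1}) forces $\int_{\{c^{\ast}\geq k\}}(c^{\ast}-k)\,dS = 0$ with $c^{\ast}-k\geq 0$ on the set of integration, so $c^{\ast}=k$ $S$--a.e.\ on $\{c^{\ast}\geq k\}$, which is exactly $S\{c^{\ast}>k\}=0$. Equality in (\ref{hi_lim1}) forces $\int_{\{c^{\ast}\leq k\}}(k-c^{\ast})\,dS = 0$ with $k-c^{\ast}\geq 0$ on that set, so $c^{\ast}=k$ $S$--a.e.\ on $\{c^{\ast}\leq k\}$, which is $S\{c^{\ast}<k\}=0$.

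No real obstacles are anticipated; the only points deserving a little care are the measurability of the level sets $\{c^{\ast}\geq k\}$, $\{c^{\ast}\leq k\}$, $\{c^{\ast}>k\}$, $\{c^{\ast}<k\}$ (which follows from measurability of $c^{\ast}$), and the integrability of $c^{\ast}$ with respect to $S$ so that the additive splittings above are legitimate. Both hypotheses are given in the statement.
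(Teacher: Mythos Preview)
Your proof is correct and essentially identical to the paper's: the same partition of $D$ into $\{c^{\ast}\geq k\}$ and $\{c^{\ast}<k\}$ (resp.\ $\{c^{\ast}\leq k\}$ and $\{c^{\ast}>k\}$) followed by integral monotonicity, and the equality cases handled via the vanishing of $\int_{\{c^{\ast}\geq k\}}(c^{\ast}-k)\,dS$, which the paper phrases by citing Rudin's Theorem~1.39.
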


\begin{proof}
We have 
\begin{equation}
\int c^{\ast }dS=\int_{\left\{ c^{\ast }\geq k\right\} }c^{\ast
}dS+\int_{\left\{ c^{\ast }<k\right\} }c^{\ast }dS\geq k\int_{\left\{
c^{\ast }\geq k\right\} }dS+\int_{\left\{ c^{\ast }<k\right\} }c^{\ast }dS
\end{equation}%
which proves equation (\ref{lo_lim1}). If there is equality in equation (\ref%
{lo_lim1}) we have 
\begin{equation*}
\int_{\left\{ c^{\ast }\geq k\right\} }c^{\ast }dS=k\int_{\left\{ c^{\ast
}\geq k\right\} }^{{}}dS
\end{equation*}%
which implies that $S\left\{ c^{\ast }>k\right\} =0$ (cf. \cite{Rudin1966},
Theorem 1.39 therein). The proof of equation (\ref{hi_lim1}) is similar,
with all inequalities reversed.
\end{proof}

\begin{definition}
A source $S$ is said to be $S^{\ast }$--detectable (with detection level $%
c_{\lim }$) if%
\begin{equation}
\int c^{\ast }dS\geq c_{\lim }  \label{def_detect}
\end{equation}%
and $S^{\ast }$--nondetectable if 
\begin{equation}
\int c^{\ast }dS<c_{\lim }  \label{def_nondetect}
\end{equation}
\end{definition}

Theorem \ref{TheoremBackward} gives necessary and sufficient conditions for $%
S$ to be $S^{\ast }$--detectable and $S^{\ast }$--nondetectable in terms of
masses on level sets $\left\{ c^{\ast }\geq k\right\} $ etc., which is
exploited in the following propositions.

\begin{proposition}
\label{Prop_necc_cond_det}(Necessary conditions for detection) Assume that $%
S $ is $S^{\ast }$--detectable and $k>0$. Then

\begin{enumerate}
\item 
\begin{equation}
kS\left\{ c^{\ast }\leq k\right\} +\int_{\left\{ c^{\ast }>k\right\}
}c^{\ast }dS\geq c_{\lim }  \label{necc_cond_det1}
\end{equation}

\item 
\begin{equation}
kS\left\{ c^{\ast }<k\right\} +\sup c^{\ast }S\left\{ c^{\ast }\geq
k\right\} \geq c_{\lim }  \label{necc_cond_det2}
\end{equation}

\item If there is a constant $\alpha \geq 0$ such that $S\left\{ c^{\ast
}<k\right\} \leq \alpha S\left\{ c^{\ast }\geq k\right\} $ then%
\begin{equation}
S\left\{ c^{\ast }\geq k\right\} \geq \frac{c_{\lim }}{k\alpha +\sup c^{\ast
}}  \label{necc_cond_det3}
\end{equation}

\item If $\sup c^{\ast }=\infty $, there is no positive lower bound on $%
S\left\{ c^{\ast }\geq k\right\} $, i.e., there are $S^{\ast }$--detectable
sources with arbitrarily small mass $S\left\{ c^{\ast }\geq k\right\} $.
\end{enumerate}
\end{proposition}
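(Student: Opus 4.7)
The plan is to run each of the four items through either Theorem~\ref{TheoremBackward} or a direct estimate of $\int c^{\ast}dS$ from above, and then invoke the detectability hypothesis $\int c^{\ast}dS\geq c_{\lim}$ to close the bound. The first three items are straightforward chains of inequalities; only item 4 requires an explicit construction.

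For item~1 I would apply the upper estimate~(\ref{hi_lim1}) of Theorem~\ref{TheoremBackward} directly: it gives
$\int c^{\ast}dS \leq kS\{c^{\ast}\leq k\}+\int_{\{c^{\ast}>k\}}c^{\ast}dS$, and combining with (\ref{def_detect}) yields (\ref{necc_cond_det1}) at once. Item~2 is obtained by splitting the integral at the threshold: on $\{c^{\ast}<k\}$ bound the integrand by $k$, and on $\{c^{\ast}\geq k\}$ bound it by $\sup c^{\ast}$, giving $\int c^{\ast}dS\leq kS\{c^{\ast}<k\}+(\sup c^{\ast})\,S\{c^{\ast}\geq k\}$; detectability then gives (\ref{necc_cond_det2}). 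Item~3 is algebraic: insert the assumption $S\{c^{\ast}<k\}\leq\alpha S\{c^{\ast}\geq k\}$ into (\ref{necc_cond_det2}), factor out $S\{c^{\ast}\geq k\}$, and divide.

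For item~4, which is a non-existence statement, I would produce an explicit family of $S^{\ast}$--detectable sources with $S\{c^{\ast}\geq k\}$ arbitrarily small. Since $\sup c^{\ast}=\infty$, choose a sequence of points $p_{n}=(s_{n},y_{n})\in T\times V$ with $c^{\ast}(p_{n})\to\infty$, and set $S_{n}=\bigl(c_{\lim}/c^{\ast}(p_{n})\bigr)\,\delta_{p_{n}}$. Each $S_{n}$ is a positive measure, $\int c^{\ast}dS_{n}=c_{\lim}$ so $S_{n}$ is $S^{\ast}$--detectable, and $S_{n}\{c^{\ast}\geq k\}=c_{\lim}/c^{\ast}(p_{n})\to 0$ as $n\to\infty$ (provided $n$ is large enough that $c^{\ast}(p_{n})\geq k$, which is eventually true).

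The main, and quite mild, obstacle is item~4: one must recognise that the lower bound from item~3 degenerates precisely because the mass needed for detection can be concentrated where $c^{\ast}$ is arbitrarily large, and exhibit the corresponding sequence of point sources. The remaining content of the proposition is essentially book-keeping on top of Theorem~\ref{TheoremBackward}.
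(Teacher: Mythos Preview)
Your proposal is correct and follows essentially the same approach as the paper: items~1, 3, and~4 are identical to the paper's argument (including the point-source construction for item~4), and your direct split for item~2 is a minor variant of the paper's route through (\ref{necc_cond_det1}) that reaches the same bound with one fewer step.
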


\begin{proof}
\begin{enumerate}
\item Equation (\ref{necc_cond_det1}) follows immediately from equations (%
\ref{hi_lim1}) and (\ref{def_detect}).

\item Since $c^{\ast }\leq \sup c^{\ast }$ we get from (\ref{necc_cond_det1}%
) that%
\begin{eqnarray}
c_{\lim } &\leq &kS\left\{ c^{\ast }\leq k\right\} +\int_{\left\{ c^{\ast
}>k\right\} }c^{\ast }dS  \notag \\
&\leq &kS\left\{ c^{\ast }\leq k\right\} +\sup c^{\ast }S\left\{ c^{\ast
}>k\right\}
\end{eqnarray}%
Moreover we have 
\begin{equation}
kS\left\{ c^{\ast }\leq k\right\} +\sup c^{\ast }S\left\{ c^{\ast
}>k\right\} \leq kS\left\{ c^{\ast }<k\right\} +\sup c^{\ast }S\left\{
c^{\ast }\geq k\right\}
\end{equation}%
(with equality if $S\left\{ c^{\ast }=k\right\} =0$, in particular, if $%
k>\sup c^{\ast }$), which proves (\ref{necc_cond_det2}).

\item We get from (\ref{necc_cond_det2}) and the additional condition $%
S\left\{ c^{\ast }<k\right\} \leq \alpha S\left\{ c^{\ast }\geq k\right\} $
that%
\begin{equation}
c_{\lim }\leq kS\left\{ c^{\ast }<k\right\} +\sup c^{\ast }S\left\{ c^{\ast
}\geq k\right\} \leq \left( k\alpha +\sup c^{\ast }\right) S\left\{ c^{\ast
}\geq k\right\}
\end{equation}%
which proves (\ref{necc_cond_det3}).

\item Take a sequence $t_{j},x_{j}$ of release times and locations such that 
$c^{\ast }\left( t_{j},x_{j}\right) \nearrow \infty $, and for each $j$ let $%
S_{j}$ be an instantaneous point source at $\left( t_{j},x_{j}\right) $ with
mass $M_{j}=c_{\lim }/c^{\ast }\left( t_{j},x_{j}\right) \searrow 0$. Then $%
\int c^{\ast }dS_{j}=c_{\lim }$, so each $S_{j}$ is $S^{\ast }$--detectable.
\end{enumerate}
\end{proof}

\begin{proposition}
\label{Prop_suff_cond_det}(Sufficient conditions for detection) Assume $k>0$
and at least one of the following conditions 1-3 is satisfied:

\begin{enumerate}
\item 
\begin{equation}
kS\left\{ c^{\ast }\geq k\right\} +\int_{\left\{ c^{\ast }<k\right\}
}c^{\ast }dS\geq c_{\lim }  \label{suff_cond_det1}
\end{equation}

\item 
\begin{equation}
kS\left\{ c^{\ast }\geq k\right\} +\inf c^{\ast }S\left\{ c^{\ast
}<k\right\} \geq c_{\lim }  \label{suff_cond_det2}
\end{equation}

\item There is a constant $\beta \geq 0$ such that%
\begin{eqnarray}
S\left\{ c^{\ast }<k\right\} &\geq &\beta S\left\{ c^{\ast }\geq k\right\} 
\text{ and}  \notag \\
S\left\{ c^{\ast }\geq k\right\} &\geq &\frac{c_{\lim }}{k+\beta \inf
c^{\ast }}  \label{suff_cond_det3}
\end{eqnarray}%
Then $S$ is $S^{\ast }$--detectable.
\end{enumerate}
\end{proposition}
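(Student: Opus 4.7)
The plan is to show implications $3 \Rightarrow 2 \Rightarrow 1 \Rightarrow \text{detectability}$, with the final link supplied by Theorem \ref{TheoremBackward}, specifically the lower bound (\ref{lo_lim1}).

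First I would dispatch condition 1. Inequality (\ref{lo_lim1}) from Theorem \ref{TheoremBackward} says
\begin{equation*}
kS\{c^{\ast}\geq k\} + \int_{\{c^{\ast}<k\}} c^{\ast}\,dS \leq \int c^{\ast}\,dS,
\end{equation*}
so if the left-hand side is at least $c_{\lim}$, then $\int c^{\ast}\,dS \geq c_{\lim}$, which is exactly the detectability condition (\ref{def_detect}). This handles case 1 immediately.

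Next I would reduce case 2 to case 1. Since $c^{\ast} \geq \inf c^{\ast}$ pointwise, on the set $\{c^{\ast}<k\}$ we have
\begin{equation*}
\int_{\{c^{\ast}<k\}} c^{\ast}\,dS \geq \inf c^{\ast} \cdot S\{c^{\ast}<k\},
\end{equation*}
so the hypothesis (\ref{suff_cond_det2}) of case 2 gives
\begin{equation*}
kS\{c^{\ast}\geq k\} + \int_{\{c^{\ast}<k\}} c^{\ast}\,dS \geq kS\{c^{\ast}\geq k\} + \inf c^{\ast}\cdot S\{c^{\ast}<k\} \geq c_{\lim},
\end{equation*}
so hypothesis (\ref{suff_cond_det1}) of case 1 holds, and we conclude detectability by the previous paragraph.

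Finally for case 3, I use the mass inequality $S\{c^{\ast}<k\} \geq \beta S\{c^{\ast}\geq k\}$ together with $\inf c^{\ast}\geq 0$ to obtain
\begin{equation*}
kS\{c^{\ast}\geq k\} + \inf c^{\ast}\cdot S\{c^{\ast}<k\} \geq (k + \beta \inf c^{\ast})\,S\{c^{\ast}\geq k\} \geq c_{\lim},
\end{equation*}
the last step being precisely the second inequality in (\ref{suff_cond_det3}). Thus the hypothesis of case 2 is met, reducing to the already-treated situation. There is no real obstacle here; the only subtlety is making sure that $\inf c^{\ast}\geq 0$ (guaranteed by the standing nonnegativity assumption on $c^{\ast}$) so that the product $\inf c^{\ast}\cdot S\{c^{\ast}<k\}$ behaves monotonically and the chain of inequalities makes sense even when $\inf c^{\ast}=0$.
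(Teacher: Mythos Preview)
Your proof is correct and follows essentially the same chain of implications $3\Rightarrow 2\Rightarrow 1\Rightarrow$ detectability as the paper, invoking (\ref{lo_lim1}) for the last step and the trivial bound $\inf c^{\ast}\cdot S\{c^{\ast}<k\}\leq \int_{\{c^{\ast}<k\}}c^{\ast}\,dS$ for $2\Rightarrow 1$. Your added remark about needing $\inf c^{\ast}\geq 0$ in case~3 is a nice clarification that the paper leaves implicit.
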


\begin{proof}
\begin{enumerate}
\item Equation (\ref{def_detect}) follows immediately from equations (\ref%
{suff_cond_det1}) and (\ref{lo_lim1}).

\item Equation (\ref{suff_cond_det1}) follows from (\ref{suff_cond_det2})
since $\inf c^{\ast }S\left\{ c^{\ast }<k\right\} \leq \int_{\left\{ c^{\ast
}<k\right\} }c^{\ast }dS$.

\item Equation (\ref{suff_cond_det2}) follows from (\ref{suff_cond_det3})
since 
\begin{equation*}
kS\left\{ c^{\ast }\geq k\right\} +\inf c^{\ast }S\left\{ c^{\ast
}<k\right\} \geq \left( k+\beta \inf c^{\ast }\right) S\left\{ c^{\ast }\geq
k\right\} \geq c_{\lim }
\end{equation*}
\end{enumerate}
\end{proof}

\begin{proposition}
\label{Prop:Necc_cond_non_det}(Necessary conditions for nondetection) Assume
that $S$ is $S^{\ast }$--nondetectable and that $k>0$. Then

\begin{enumerate}
\item 
\begin{equation}
kS\left\{ c^{\ast }\geq k\right\} +\int_{\left\{ c^{\ast }<k\right\}
}c^{\ast }dS<c_{\lim }  \label{necc_cond_nondet1}
\end{equation}

\item 
\begin{equation}
kS\left\{ c^{\ast }\geq k\right\} +\inf c^{\ast }S\left\{ c^{\ast
}<k\right\} <c_{\lim }  \label{necc_cond_nondet2}
\end{equation}

\item If there is a constant $\gamma \geq 0$ such that $S\left\{ c^{\ast
}<k\right\} \geq \gamma S\left\{ c^{\ast }\geq k\right\} $ then%
\begin{equation}
S\left\{ c^{\ast }\geq k\right\} <\frac{c_{\lim }}{k+\gamma \inf c^{\ast }}
\label{necc_cond_nondet3}
\end{equation}
\end{enumerate}
\end{proposition}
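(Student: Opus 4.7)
The plan is to mirror the structure of Proposition \ref{Prop_suff_cond_det}, since the three claims here are obtained by combining the lower bound (\ref{lo_lim1}) with the nondetection hypothesis $\int c^* dS < c_{\lim}$, in the same way that Proposition \ref{Prop_suff_cond_det} combined (\ref{lo_lim1}) with the detection hypothesis $\int c^* dS \geq c_{\lim}$. I would prove the items in the order they are stated, since each subsequent item follows by a simple estimate from the previous one.

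For item 1, I would simply chain inequality (\ref{lo_lim1}) from Theorem \ref{TheoremBackward} with the definition (\ref{def_nondetect}) of nondetectability: the bound
\begin{equation*}
kS\{c^*\geq k\}+\int_{\{c^*<k\}}c^*\,dS\leq \int c^*\,dS<c_{\lim}
\end{equation*}
is immediate, giving (\ref{necc_cond_nondet1}). For item 2, I would replace $c^*$ by $\inf c^*$ on the region $\{c^*<k\}$, noting that $\inf c^*\,S\{c^*<k\}\leq \int_{\{c^*<k\}}c^*\,dS$, which combined with item 1 yields (\ref{necc_cond_nondet2}).

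For item 3, I would use the additional hypothesis $S\{c^*<k\}\geq \gamma S\{c^*\geq k\}$ in (\ref{necc_cond_nondet2}) to get
\begin{equation*}
(k+\gamma \inf c^*)\,S\{c^*\geq k\}\leq kS\{c^*\geq k\}+\inf c^*\,S\{c^*<k\}<c_{\lim},
\end{equation*}
and then divide through by $k+\gamma \inf c^*$ (which is strictly positive because $k>0$) to recover (\ref{necc_cond_nondet3}).

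There is essentially no obstacle here: the key inequality has already been established in Theorem \ref{TheoremBackward}, and all that remains is bookkeeping with inf/sup estimates. The only minor subtlety worth flagging is that the strict inequality in (\ref{def_nondetect}) propagates through each step, so the bounds in (\ref{necc_cond_nondet1})--(\ref{necc_cond_nondet3}) are strict rather than weak; and in item 3 one should implicitly note that the denominator $k+\gamma \inf c^*$ is nonzero since $k>0$, so the displayed bound is always well-defined.
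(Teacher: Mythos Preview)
Your proof is correct and is exactly what the paper's one-line proof ``Contrapositive of Proposition \ref{Prop_suff_cond_det}'' unpacks to: each of your items is the negation of the corresponding sufficient condition there, derived by running the same chain of estimates from (\ref{lo_lim1}) with the strict inequality (\ref{def_nondetect}) in place of (\ref{def_detect}). There is no substantive difference in approach.
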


\begin{proof}
Contrapositive of Proposition \ref{Prop_suff_cond_det}.
\end{proof}

\begin{proposition}
\label{Prop:Suff_cond_non_det}(Sufficient conditions for non--detection)
Assume $k>0$ and at least one of the following conditions 1-3 are satisfied:

\begin{enumerate}
\item 
\begin{equation}
kS\left\{ c^{\ast }\leq k\right\} +\int_{\left\{ c^{\ast }>k\right\}
}c^{\ast }dS<c_{\lim }  \label{suff_cond_nondet1}
\end{equation}

\item 
\begin{equation}
kS\left\{ c^{\ast }<k\right\} +\sup c^{\ast }S\left\{ c^{\ast }\geq
k\right\} <c_{\lim }  \label{suff_cond_nondet2}
\end{equation}

\item There is a constant $\varepsilon \geq 0$ such that%
\begin{eqnarray}
S\left\{ c^{\ast }<k\right\} &\leq &\varepsilon S\left\{ c^{\ast }\geq
k\right\} \text{ and}  \notag \\
S\left\{ c^{\ast }\geq k\right\} &<&\frac{c_{\lim }}{k\varepsilon +\sup
c^{\ast }}  \label{suff_cond_nondet3}
\end{eqnarray}%
Then $S$ is $S^{\ast }$--nondetectable.
\end{enumerate}
\end{proposition}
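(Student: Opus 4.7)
The cleanest route is to observe that this proposition is simply the contrapositive of Proposition \ref{Prop_necc_cond_det}, in exact analogy with how Proposition \ref{Prop:Necc_cond_non_det} was obtained as the contrapositive of Proposition \ref{Prop_suff_cond_det}. For item 1, inequality (\ref{suff_cond_nondet1}) is the strict negation of (\ref{necc_cond_det1}); for item 2, (\ref{suff_cond_nondet2}) is the strict negation of (\ref{necc_cond_det2}); and for item 3, the hypothesis $S\{c^{\ast}<k\}\leq \varepsilon S\{c^{\ast}\geq k\}$ together with the strict upper bound in (\ref{suff_cond_nondet3}) is precisely the failure of the conclusion of item 3 in Proposition \ref{Prop_necc_cond_det} under the same quantitative ratio hypothesis. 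In each case the contrapositive yields that $S$ is not $S^{\ast}$-detectable, i.e., $S^{\ast}$-nondetectable.

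If a direct, self-contained proof is preferred, the plan is to peel the three conditions from weakest to strongest. Starting from (\ref{suff_cond_nondet1}), I would invoke the upper estimate (\ref{hi_lim1}) of Theorem \ref{TheoremBackward} to conclude $\int c^{\ast}\,dS\leq kS\{c^{\ast}\leq k\}+\int_{\{c^{\ast}>k\}}c^{\ast}\,dS<c_{\lim}$, which is (\ref{def_nondetect}). Next I would show (\ref{suff_cond_nondet2}) implies (\ref{suff_cond_nondet1}) by splitting $\{c^{\ast}\leq k\}=\{c^{\ast}<k\}\cup\{c^{\ast}=k\}$ and using $c^{\ast}\leq \sup c^{\ast}$ on $\{c^{\ast}>k\}$ (the same reshuffling used in the proof of Proposition \ref{Prop_necc_cond_det} item 2, but now with the inequality pointing the other way). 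Finally, I would reduce (\ref{suff_cond_nondet3}) to (\ref{suff_cond_nondet2}) via the one-line estimate
\begin{equation*}
kS\{c^{\ast}<k\}+\sup c^{\ast}\,S\{c^{\ast}\geq k\}\leq (k\varepsilon+\sup c^{\ast})\,S\{c^{\ast}\geq k\}<c_{\lim}.
\end{equation*}

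There is no real obstacle here; the only subtlety, should one opt for the direct argument, is to keep track of whether $\{c^{\ast}=k\}$ is lumped with $\{c^{\ast}<k\}$ or with $\{c^{\ast}>k\}$ when passing from (\ref{suff_cond_nondet1}) to (\ref{suff_cond_nondet2}), which is handled by the bound $kS\{c^{\ast}=k\}\leq \sup c^{\ast}\,S\{c^{\ast}=k\}$ (trivial when $k>\sup c^{\ast}$, since then $S\{c^{\ast}=k\}=0$). Given the author's brevity on the companion Proposition \ref{Prop:Necc_cond_non_det}, I would likewise expect the written proof to consist of the single sentence invoking the contrapositive of Proposition \ref{Prop_necc_cond_det}.
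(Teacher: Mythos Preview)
Your proposal is correct and matches the paper's proof exactly: the paper's entire proof reads ``Contrapositive of Proposition \ref{Prop_necc_cond_det}.'' Your anticipation of this, together with the (unneeded but sound) direct argument, is spot on.
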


\begin{proof}
Contrapositive of Proposition \ref{Prop_necc_cond_det}.
\end{proof}

\section{Posterior and prior footprints, posterior and prior zero footprints 
\label{Section:Footprints}}

We want to define the notions of footprint and zero footprint. A footprint
is, loosely speaking, a subset $F$ of spacetime where the total source mass
is larger than a specified limit $M_{\lim }$, i.e., 
\begin{equation}
\int_{F}dS\left( t,x\right) \geq M_{\lim }  \label{FootprintMassCondition}
\end{equation}%
for all source measures $S$ in a given admissible class $\mathcal{S}$.
Likewise, a zero footprint is a subset $Z$ of spacetime where the total
source mass is smaller than a specified limit 
\begin{equation}
\int_{Z}dS\left( t,x\right) <M_{\lim }  \label{ZeroFootprintMassCondition}
\end{equation}%
for all $S\in \mathcal{S}$. To be of interest, the footprints and zero
footprints should be associated not only to a fixed set $\mathcal{S}$ of
admissible sources, but moreover restricted to subsets of $\mathcal{S}$
determined by conditions on measured values. Hence, given an $m$--tuple of
measurements $\left( S_{1}^{\ast },...,S_{m}^{\ast }\right) $ and
corresponding adjoint fields $c_{j}^{\ast }\left( s,y\right) =\iint_{T\times
V}p\left( s,y;t,x\right) dS_{j}^{\ast }\left( t,x\right) $ we consider
conditions on the form%
\begin{equation}
\left\langle S,c_{j}^{\ast }\right\rangle \geq \widetilde{c}_{\lim ,j}\text{
or }\left\langle S,c_{j}^{\ast }\right\rangle <\widetilde{c}_{\lim ,j}\text{
for }j=1,...m  \label{MeasuredValuesConditionNatural}
\end{equation}%
where $\widetilde{c}_{\lim ,j}>0$ are given limits (sensor thresholds). We
could work with these conditions in the form stated, but for the application
we have in mind (and for the sake of brevity) it is convenient to rewrite
these conditions in a form where the inequalities in both conditions (\ref%
{MeasuredValuesConditionNatural}) go in the same direction. We achieve this
by letting $\hat{c}_{\lim ,j}:=$ $\widetilde{c}_{\lim ,j}$\ if $\left\langle
S,c_{j}^{\ast }\right\rangle \geq \widetilde{c}_{\lim ,j}$, and $\check{c}%
_{\lim ,j}:=-$ $\widetilde{c}_{\lim ,j}$ if $\left\langle S,c_{j}^{\ast
}\right\rangle <\widetilde{c}_{\lim ,j}$, thus we have 
\begin{subequations}
\label{MeasuredValuesConditionInTransit}
\begin{equation}
\left\langle S,c_{j}^{\ast }\right\rangle -\hat{c}_{\lim ,j}>0
\label{PositiveLim}
\end{equation}%
for $\hat{c}_{\lim ,j}>0$, and 
\end{subequations}
\begin{equation}
-\left\langle S,c_{j}^{\ast }\right\rangle -\check{c}_{\lim ,j}>0
\label{NegativeLim}
\end{equation}%
for $\check{c}_{\lim ,j}<0$. By letting%
\begin{equation*}
c_{\lim ,j}=\left\{ 
\begin{array}{ccc}
\hat{c}_{\lim ,j} & \text{if} & \left\langle S,c_{j}^{\ast }\right\rangle
\geq \widetilde{c}_{\lim ,j} \\ 
\check{c}_{\lim ,j} & \text{if} & \left\langle S,c_{j}^{\ast }\right\rangle <%
\widetilde{c}_{\lim ,j}%
\end{array}%
\right. 
\end{equation*}%
we combine (\ref{PositiveLim}) and (\ref{NegativeLim}) into 
\begin{equation}
c_{\lim ,j}\geq 0\text{ or }c_{\lim ,j}<0\text{ and}\func{sign}\left(
c_{\lim ,j}\right) \left( \left\langle S,c_{j}^{\ast }\right\rangle
-\left\vert c_{\lim ,j}\right\vert \right) \geq 0\text{ for }j=1,...,m
\label{MeasuredValuesCondition}
\end{equation}%
where we define 
\begin{equation}
\func{sign}\left( c\right) =1_{\left\{ c\geq 0\right\} }-1_{\left\{
c<0\right\} }  \label{signDefinition}
\end{equation}%
We note that the limit $c_{\lim ,j}$ has the same physical interpretation as 
$\widetilde{c}_{\lim ,j}$, the value of $c_{\lim ,j}$ is the limit
(threshold) while the sign of $\widetilde{c}_{\lim ,j}$ tells whether the
limit is exceeded (+) or not (-). Hence we represent lower limits by
positive values of $c_{\lim ,j}$\ and upper limits by negative values of $%
c_{\lim ,j}$. We define a footprint set $F$ by requiring a logical
implication between the footprint mass condition, equation (\ref%
{FootprintMassCondition}), and the measurement condition, equation (\ref%
{MeasuredValuesCondition}). Likewise, we define a zero footprint set $Z$ by
requiring a logical implication between the zero footprint mass condition,
equation (\ref{ZeroFootprintMassCondition}), and equation (\ref%
{MeasuredValuesCondition}). If the mass condition is necessary for the
measurement condition, we say that we have a posterior footprint or
posterior zero footprint; if the mass condition is sufficient, we say that
we have a prior footprint or prior zero footprint. Hence, posterior
footprints and posterior zero footprints are used to deduce facts about the
released masses, given the measurements, whilst prior footprints are used to
deduce facts about the measurements, given facts about the released masses.

More precisely, we have

\begin{definition}
A subset $F\subset T\times V$ is said to be a posterior $\left( S^{\ast
},c_{\lim },\mathcal{S},M_{\lim }\right) $--footprint (or posterior
footprint when the parameters are understood) if $S\in \mathcal{S}$ and $%
\func{sign}\left( c_{\lim ,j}\right) \left( \left\langle S,c_{j}^{\ast
}\right\rangle -\left\vert c_{\lim ,j}\right\vert \right) \geq 0$ for $%
j=1,...,m$ implies that $S\left( F\right) \geq M_{\lim }$. $F\subset T\times
V$ is said to be a prior $\left( S^{\ast },c_{\lim },\mathcal{S},M_{\lim
}\right) $--footprint (or prior footprint when the parameters are
understood) if $S\in \mathcal{S}$ and $S\left( F\right) \geq M_{\lim }$
implies that $\func{sign}\left( c_{\lim ,j}\right) \left( \left\langle
S,c_{j}^{\ast }\right\rangle -\left\vert c_{\lim ,j}\right\vert \right) \geq
0$ for $j=1,...,m$.
\end{definition}

\begin{definition}
A subset $Z\subset T\times V$ is said to be a posterior $\left( S^{\ast
},c_{\lim },\mathcal{S},M_{\lim }\right) $--zero footprint (or posterior
zero footprint when the parameters are understood) if $S\in \mathcal{S}$ and 
$\func{sign}\left( c_{\lim ,j}\right) \left( \left\langle S,c_{j}^{\ast
}\right\rangle -\left\vert c_{\lim ,j}\right\vert \right) \geq 0$ for $%
j=1,...,m$ implies that $S\left( F\right) <M_{\lim }$. $Z\subset T\times V$
is said to be a prior $\left( S^{\ast },c_{\lim },\mathcal{S},M_{\lim
}\right) $--zero footprint (or prior zero footprint when the parameters are
understood) if $S\in \mathcal{S}$ and $S\left( F\right) <M_{\lim }$ implies
that $\func{sign}\left( c_{\lim ,j}\right) \left( \left\langle S,c_{j}^{\ast
}\right\rangle -\left\vert c_{\lim ,j}\right\vert \right) \geq 0$ for $%
j=1,...,m$.
\end{definition}

\begin{remark}
Note that the vector $c_{\lim }$ in the previous definition can hold both
positive and negative elements, thus we are handling measurements (positive
elements) and non-measurements (negative elements) simultaneously.
\end{remark}

To see some examples, consider the case of one measurement.

\begin{proposition}
\label{prop:FootPrintExample}If $m=1$, $c_{\lim }>0$, $k>0$, $\alpha \geq 0$%
, $\sup c^{\ast }<\infty $ and \ 
\begin{equation*}
\mathcal{S=}\left\{ S\in \mathcal{M}^{+}:S\left\{ c^{\ast }<k\right\} \leq
\alpha S\left\{ c^{\ast }\geq k\right\} \right\}
\end{equation*}%
and 
\begin{equation*}
M_{\lim }=\frac{c_{\lim }}{k\alpha +\sup c^{\ast }}
\end{equation*}%
then $\left\{ c^{\ast }\geq k\right\} $ is a posterior $\left( S^{\ast
},c_{\lim },\mathcal{S},M_{\lim }\right) $--footprint and a prior $\left(
S^{\ast },-c_{\lim },\mathcal{S},M_{\lim }\right) $--zero footprint.
\end{proposition}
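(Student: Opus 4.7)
Both assertions reduce to Proposition \ref{Prop_necc_cond_det} item 3 once the sign conventions in Definitions 11 and 12 are unravelled. The only real work is to check that the admissibility class $\mathcal{S}$ and the mass threshold $M_{\lim}$ are tuned so that the conclusion of that proposition matches exactly what each definition demands; there is no nontrivial analysis left to do.

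For the posterior footprint statement, I would first observe that since $c_{\lim}>0$ we have $\func{sign}(c_{\lim})=1$ and $|c_{\lim}|=c_{\lim}$, so the measurement condition of Definition 11 collapses to $\left\langle S,c^{\ast}\right\rangle \geq c_{\lim}$, i.e.\ $S$ is $S^{\ast}$--detectable. The admissibility built into $\mathcal{S}$ is precisely the hypothesis $S\{c^{\ast}<k\}\leq \alpha S\{c^{\ast}\geq k\}$ of Proposition \ref{Prop_necc_cond_det} item 3, which therefore yields
\[
S\{c^{\ast}\geq k\}\geq \frac{c_{\lim}}{k\alpha+\sup c^{\ast}}=M_{\lim},
\]
and this is the required bound $S(F)\geq M_{\lim}$ with $F=\{c^{\ast}\geq k\}$.

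For the prior zero footprint statement I would argue by contrapositive. With the parameter $-c_{\lim}$ in place of $c_{\lim}$ we have $\func{sign}(-c_{\lim})=-1$ and $|-c_{\lim}|=c_{\lim}$, so the measurement condition of Definition 12 becomes $\left\langle S,c^{\ast}\right\rangle \leq c_{\lim}$. Suppose this fails for some $S\in\mathcal{S}$, so $\left\langle S,c^{\ast}\right\rangle > c_{\lim}$; then $S$ is in particular $S^{\ast}$--detectable, and Proposition \ref{Prop_necc_cond_det} item 3 again delivers $S\{c^{\ast}\geq k\}\geq M_{\lim}$. Hence the mass condition $S(Z)<M_{\lim}$ with $Z=\{c^{\ast}\geq k\}$ cannot hold, which is exactly the contrapositive of the required implication.

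The main (and essentially only) obstacle is keeping the sign-bookkeeping of Definitions 11-12 straight and remembering that $-c_{\lim}<0$ flips the direction of the measurement inequality while leaving the absolute threshold unchanged. Once it is observed that $c_{\lim}>0$ reads off as the detectability inequality $\langle S,c^{\ast}\rangle\geq c_{\lim}$ and that $-c_{\lim}$ reads off as its negation, both claims are Proposition \ref{Prop_necc_cond_det} item 3 instantiated at the parameters of the statement.
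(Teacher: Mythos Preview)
Your proof is correct and matches the paper's approach. The paper's own proof simply cites Proposition~\ref{Prop_necc_cond_det} (for the posterior footprint) and Proposition~\ref{Prop:Suff_cond_non_det} (for the prior zero footprint); since Proposition~\ref{Prop:Suff_cond_non_det} is itself stated and proved as the contrapositive of Proposition~\ref{Prop_necc_cond_det}, your direct contrapositive argument for the second claim is exactly the same reasoning, just without routing through the intermediate proposition.
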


\begin{proof}
Proposition \ref{Prop_necc_cond_det} and Proposition \ref%
{Prop:Suff_cond_non_det}.
\end{proof}

\begin{proposition}
\label{prop:ZeroFootprintExample}If $m=1$, $c_{\lim }<0$, $k>0$, $\beta \geq
0$ and 
\begin{equation*}
\mathcal{S=}\left\{ S\in \mathcal{M}^{+}:S\left\{ c^{\ast }<k\right\} \geq
\beta S\left\{ c^{\ast }\geq k\right\} \right\}
\end{equation*}%
and 
\begin{equation*}
M_{\lim }=\frac{-c_{\lim }}{k+\beta \inf c^{\ast }}
\end{equation*}%
then $\left\{ c^{\ast }\geq k\right\} $ is a prior $\left( S^{\ast
},-c_{\lim },\mathcal{S},M_{\lim }\right) $--footprint and a posterior $%
\left( S^{\ast },c_{\lim },\mathcal{S},M_{\lim }\right) $--zero footprint.
Note that if $\beta =0$ then $\mathcal{S=M}^{+}$.
\end{proposition}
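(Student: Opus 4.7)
The plan is to mirror the proof of Proposition \ref{prop:FootPrintExample} with the roles of prior/posterior and footprint/zero-footprint interchanged. Because here $c_{\lim}<0$, I will invoke the sufficient condition for detection (Proposition \ref{Prop_suff_cond_det}, item 3) to obtain the prior footprint, and the necessary condition for non-detection (Proposition \ref{Prop:Necc_cond_non_det}, item 3) to obtain the posterior zero footprint.

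First I would unpack the sign convention in (\ref{MeasuredValuesCondition}). The measurement condition with parameter $-c_{\lim}>0$ reduces to $\langle S,c^{\ast}\rangle \geq -c_{\lim}$, i.e.\ $S^{\ast}$--detectability at level $-c_{\lim}$. The measurement condition with parameter $c_{\lim}<0$ reduces, via the strict form (\ref{NegativeLim}), to $\langle S,c^{\ast}\rangle < -c_{\lim}$, i.e.\ $S^{\ast}$--nondetectability at the same level. With these translations in hand the claims become purely statements about detection/non-detection under the constraint defining $\mathcal{S}$.

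For the prior footprint claim, assume $S\in\mathcal{S}$ and $S\{c^{\ast}\geq k\}\geq M_{\lim}$. The definition of $\mathcal{S}$ gives $S\{c^{\ast}<k\}\geq\beta S\{c^{\ast}\geq k\}$, and the hypothesis on $M_{\lim}$ rearranges to $S\{c^{\ast}\geq k\}\geq (-c_{\lim})/(k+\beta\inf c^{\ast})$. These are precisely the two inequalities of Proposition \ref{Prop_suff_cond_det}, item 3, applied with detection level $-c_{\lim}>0$, so $\langle S,c^{\ast}\rangle\geq -c_{\lim}$, which is the required measurement condition.

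For the posterior zero footprint claim, assume $S\in\mathcal{S}$ together with the measurement condition, i.e.\ $\langle S,c^{\ast}\rangle<-c_{\lim}$ (non-detectability at level $-c_{\lim}$). Combined with $S\{c^{\ast}<k\}\geq\beta S\{c^{\ast}\geq k\}$ coming from $S\in\mathcal{S}$, these are the hypotheses of Proposition \ref{Prop:Necc_cond_non_det}, item 3, with $\gamma=\beta$ and detection level $-c_{\lim}$; the conclusion is exactly $S\{c^{\ast}\geq k\}<(-c_{\lim})/(k+\beta\inf c^{\ast})=M_{\lim}$. There is no substantive obstacle; the only point requiring care is the bookkeeping between the sign-convention form of the measurement condition and the detect/non-detect language of Section \ref{Section:SettingofProblem}, together with checking that the $\geq$ in the prior direction and the strict $<$ in the posterior direction line up with the $\geq$ hypothesis of Proposition \ref{Prop_suff_cond_det} and the strict $<$ hypothesis of Proposition \ref{Prop:Necc_cond_non_det}, respectively.
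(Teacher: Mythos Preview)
Your proof is correct and follows exactly the same approach as the paper, which simply cites Proposition~\ref{Prop_suff_cond_det} and Proposition~\ref{Prop:Necc_cond_non_det}; you have merely unpacked the sign convention and identified that item~3 of each proposition (with $\gamma=\beta$ and detection level $-c_{\lim}$) is the relevant case. The only minor remark is that the paper's reformulated measurement condition (\ref{MeasuredValuesCondition}) literally gives $\langle S,c^{\ast}\rangle\leq -c_{\lim}$ rather than the strict inequality, but your use of the strict form matches the original nondetectability definition (\ref{def_nondetect}) and is what is needed to invoke Proposition~\ref{Prop:Necc_cond_non_det}.
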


\begin{proof}
Proposition \ref{Prop_suff_cond_det} and Proposition \ref%
{Prop:Necc_cond_non_det}.
\end{proof}

The fact that we obtained pairs of prior/posterior footprints/zero
footprints in the preceding propositions is not a coincidence. Indeed, we
have

\begin{proposition}
\label{prop:PriorPosteriorPairs}Assume that $c_{\lim }^{\prime }\neq c_{\lim
}$ and that $c_{\lim ,j}^{\prime }=\pm c_{\lim ,j}$. Then

\begin{enumerate}
\item If $A$ is a posterior $\left( S^{\ast },c_{\lim },\mathcal{S},M_{\lim
}\right) $--footprint then $A$ is a prior $\left( S^{\ast },c_{\lim
}^{\prime },\mathcal{S},M_{\lim }\right) $--zero footprint.

\item If $A$ is a prior $\left( S^{\ast },c_{\lim },\mathcal{S},M_{\lim
}\right) $--footprint then $A$ is a posterior $\left( S^{\ast },c_{\lim
}^{\prime },\mathcal{S},M_{\lim }\right) $--zero footprint.

\item If $A$ is a posterior $\left( S^{\ast },c_{\lim },\mathcal{S},M_{\lim
}\right) $--zero footprint then $A$ is a prior $\left( S^{\ast },c_{\lim
}^{\prime },\mathcal{S},M_{\lim }\right) $--footprint.

\item If $A$ is a prior $\left( S^{\ast },c_{\lim },\mathcal{S},M_{\lim
}\right) $--zero footprint then $A$ is a posterior $\left( S^{\ast },c_{\lim
}^{\prime },\mathcal{S},M_{\lim }\right) $--footprint.
\end{enumerate}

\begin{proof}
The condition $\limfunc{sign}\left( c_{\lim ,j}\right) \left( \left\langle
S,c_{j}^{\ast }\right\rangle -\left\vert c_{\lim ,j}\right\vert \right) \geq
0$ is not fulfilled for all $j$ if and only if it is violated for at least
one component, i.e., $\limfunc{sign}\left( c_{\lim ,j}^{\prime }\right)
\left( \left\langle S,c^{\ast }\right\rangle -\left\vert c_{\lim ,j}^{\prime
}\right\vert \right) \geq 0$ for some $c_{\lim }^{\prime }\neq c_{\lim }$
with $c_{\lim ,j}^{\prime }=\pm c_{\lim ,j}$, so the results follows by
contraposition.
\end{proof}
\end{proposition}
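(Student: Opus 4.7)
The plan is to deduce all four assertions uniformly from a single logical bridge relating the measurement condition for $c_{\lim}$ to the one for $c_{\lim}'$. Writing $P(v)$ for the measurement condition $\func{sign}(v_j)(\langle S, c_j^{\ast} \rangle - |v_j|) \geq 0$ holding simultaneously for all $j = 1, \dots, m$, the four footprint notions unpack as implications between $P(\cdot)$ and a mass condition on $A$---namely $S(A) \geq M_{\lim}$ for footprints and $S(A) < M_{\lim}$ for zero footprints---with the direction of the implication (hypothesis versus conclusion) selecting posterior versus prior. Each item of the proposition then asks us to transport such an implication from $c_{\lim}$ to $c_{\lim}'$ while simultaneously swapping which side of the mass inequality plays the role of hypothesis.

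The bridge I would establish first is the componentwise observation that, since $|c_{\lim,j}'| = |c_{\lim,j}|$ and $\func{sign}(-c) = -\func{sign}(c)$ away from zero, flipping the sign of one coordinate $j_0$ turns a strict violation $\func{sign}(c_{\lim,j_0})(\langle S, c_{j_0}^{\ast} \rangle - |c_{\lim,j_0}|) < 0$ into a strict satisfaction $\func{sign}(c_{\lim,j_0}')(\langle S, c_{j_0}^{\ast} \rangle - |c_{\lim,j_0}'|) > 0$. Hence if $P(c_{\lim})$ fails then $P(c_{\lim}')$ holds for any $c_{\lim}'$ obtained by sign-flipping a coordinate at which the conjunction was violated, with free choice elsewhere; symmetrically, failure of $P(c_{\lim}')$ implies $P(c_{\lim})$.

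With the bridge in hand each item becomes a one-line contraposition. For item~1, start from the posterior-footprint implication $P(c_{\lim}) \Rightarrow S(A) \geq M_{\lim}$, take its contrapositive $S(A) < M_{\lim} \Rightarrow \neg P(c_{\lim})$, and apply the bridge to conclude $S(A) < M_{\lim} \Rightarrow P(c_{\lim}')$, which is exactly the definition of a prior $(S^{\ast}, c_{\lim}', \mathcal{S}, M_{\lim})$--zero footprint. Items~2 (prior footprint yielding a posterior zero footprint), 3 (posterior zero footprint yielding a prior footprint), and 4 (prior zero footprint yielding a posterior footprint) proceed identically: one contraposes the defining implication and applies the bridge in the appropriate direction.

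The main obstacle is the subtle mismatch between strict and non-strict inequalities at the boundary $\langle S, c_j^{\ast} \rangle = |c_{\lim,j}|$. The measurement condition is phrased with $\geq 0$, whose negation is $< 0$, so the sign flip lands safely inside the strict interior $> 0$; but in the reverse direction one would pick up this boundary as a potential gap, and the convention $\func{sign}(c) = 1_{\{c \geq 0\}} - 1_{\{c < 0\}}$ further complicates the degenerate case $c_{\lim,j} = 0$. Since the proposition asserts only the four one-way implications rather than full equivalences, these issues do not actually cause a failure of the argument, but they are the place where care is needed and they explain why the author's very terse ``contraposition'' remark is correct yet somewhat compressed.
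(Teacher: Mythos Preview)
Your approach is essentially the same as the paper's: both arguments reduce the four implications to contraposition together with the single observation that negating the measurement conjunction for $c_{\lim}$ amounts to flipping the sign at a violated coordinate, hence satisfying the corresponding condition for some $c_{\lim}'$. Your write-up is more explicit than the paper's one-sentence proof and usefully flags the strict/non-strict boundary issue, but the underlying logic is identical.
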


Let us now investigate how we can construct new footprints from old ones by
set theory operations. Some are obvious, collected in the following

\begin{proposition}
\label{prop:inclusions}
\end{proposition}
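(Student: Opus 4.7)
The plan is to reduce every inclusion assertion to monotonicity of the positive measure $S$ under set inclusion, combined with the direction (necessary vs.\ sufficient) of the implication encoded in each of the four footprint variants. Write $\Phi(S)$ for the measurement condition $\func{sign}\left( c_{\lim ,j}\right) \left( \left\langle S,c_{j}^{\ast }\right\rangle -\left\vert c_{\lim ,j}\right\vert \right) \geq 0$ for $j=1,\ldots,m$. The four definitions read, for $S\in \mathcal{S}$: $\Phi(S)\Rightarrow S(F)\geq M_{\lim }$ (posterior footprint), $S(F)\geq M_{\lim }\Rightarrow \Phi(S)$ (prior footprint), $\Phi(S)\Rightarrow S(Z)<M_{\lim }$ (posterior zero footprint) and $S(Z)<M_{\lim }\Rightarrow \Phi(S)$ (prior zero footprint).

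First I would enumerate the four inclusion statements I expect the proposition to collect: any superset (within $T\times V$) of a posterior footprint is a posterior footprint; any subset of a prior footprint is a prior footprint; any subset of a posterior zero footprint is a posterior zero footprint; any superset of a prior zero footprint is a prior zero footprint. Each pairing is forced by matching the direction of the mass inequality ($\geq M_{\lim }$ or $<M_{\lim }$) to the direction of inclusion that monotonicity of $S$ preserves.

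For the two posterior cases I would fix $S\in \mathcal{S}$ satisfying $\Phi(S)$, invoke the hypothesis on the original set to obtain $S(F)\geq M_{\lim }$ (respectively $S(Z)<M_{\lim }$), and then apply $A\subseteq B\Rightarrow S(A)\leq S(B)$ to push the inequality onto the enlarged (respectively shrunk) set in the direction that preserves it. For the two prior cases I would run the same argument from the opposite end: assume the mass inequality on the new set, use monotonicity to pull it back to the original set, and then invoke the prior hypothesis to conclude $\Phi(S)$. The parameters $S^{\ast }$, $c_{\lim }$, $\mathcal{S}$ and $M_{\lim }$ are held fixed throughout, so nothing about them needs to be checked.

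The only real obstacle is bookkeeping: one must pair a $\geq M_{\lim }$ bound with passage to supersets and a $<M_{\lim }$ bound with passage to subsets when the mass inequality is in the \emph{conclusion} of the implication, and invert this pairing when it is in the \emph{hypothesis}. Swapping the pairings renders the statements vacuous or false. Beyond this, the proof has no analytic content: it uses only positivity and monotonicity of $S$, and could alternatively be obtained from Proposition \ref{prop:PriorPosteriorPairs} by converting each prior statement into a posterior statement for the opposite variant and then only verifying the two posterior assertions directly.
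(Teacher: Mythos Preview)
Your argument is correct and matches the paper's treatment: the paper gives no proof at all, prefacing the proposition with ``Some are obvious, collected in the following,'' and your monotonicity-of-measure reasoning is exactly the obvious argument being alluded to.

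There is, however, a small mismatch in scope. You write that ``the parameters $S^{\ast }$, $c_{\lim }$, $\mathcal{S}$ and $M_{\lim }$ are held fixed throughout, so nothing about them needs to be checked.'' In the paper the four items actually allow $\mathcal{S}$ and $M_{\lim}$ to vary as well: each clause also permits passing to a smaller admissible class $\mathcal{S}'\subseteq\mathcal{S}$, and to a relaxed mass threshold ($M_{\lim}'\le M_{\lim}$ in the posterior-footprint and prior-zero-footprint cases, $M_{\lim}'\ge M_{\lim}$ in the prior-footprint and posterior-zero-footprint cases). These extra monotonicities are just as trivial as the set-inclusion ones --- shrinking $\mathcal{S}$ only removes hypotheses, and the $M_{\lim}$ directions line up with the inequality in the mass condition exactly as your ``bookkeeping'' paragraph describes for the sets --- so your proof extends immediately once you add a sentence for each. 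But as written, your claim that these parameters need no checking understates what the proposition asserts.
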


\begin{enumerate}
\item If $F$ is a posterior $\left( S^{\ast },c_{\lim },\mathcal{S},M_{\lim
}\right) $--footprint, $\mathcal{S}^{\prime }\subseteq \mathcal{S}$, $%
M_{\lim }^{\prime }\leq M_{\lim }$ and $F^{\prime }\supseteq F$, then $%
F^{\prime }$ is a $\left( S^{\ast },c_{\lim },\mathcal{S}^{\prime },M_{\lim
}^{\prime }\right) $--footprint.

\item If $F$ is a prior $\left( S^{\ast },c_{\lim },\mathcal{S},M_{\lim
}\right) $--footprint, $\mathcal{S}^{\prime }\subseteq \mathcal{S}$, $%
M_{\lim }^{\prime }\geq M_{\lim }$ and $F^{\prime }\subseteq F$, then $%
F^{\prime }$ is a prior $\left( S^{\ast },c_{\lim },\mathcal{S}^{\prime
},M_{\lim }^{\prime }\right) $--footprint.

\item If $Z$ is a posterior $\left( S^{\ast },c_{\lim },\mathcal{S},M_{\lim
}\right) $--zero footprint, $\mathcal{S}^{\prime }\subseteq \mathcal{S}$, $%
M_{\lim }^{\prime }\geq M_{\lim }$ and $F^{\prime }\subseteq F$, then $%
F^{\prime }$ is a posterior $\left( S^{\ast },c_{\lim },\mathcal{S}^{\prime
},M_{\lim }^{\prime }\right) $--zero footprint.

\item If $Z$ is a prior $\left( S^{\ast },c_{\lim },\mathcal{S},M_{\lim
}\right) $--zero footprint, $\mathcal{S}^{\prime }\subseteq \mathcal{S}$, $%
M_{\lim }^{\prime }\leq M_{\lim }$ and $F^{\prime }\supseteq F$, then $%
F^{\prime }$ is a prior $\left( S^{\ast },c_{\lim },\mathcal{S}^{\prime
},M_{\lim }^{\prime }\right) $--zero footprint.
\end{enumerate}

\begin{definition}
\begin{enumerate}
\item A set $F_{\min }\in T\times V$ is said to be a minimal posterior
footprint if there is no other posterior footprint $F$ (with the same
parameters) with $F\subset F_{\min }$.

\item A set $F_{\max }\in T\times V$ is said to be a maximal prior footprint
if there is no other prior footprint $F$ (with the same parameters) with $%
F\supset F_{\max }$.

\item A set $Z_{\max }\in T\times V$ is said to be a maximal posterior zero
footprint if there is no other posterior zero footprint $Z$ (with the same
parameters) with $Z\supset Z_{\max }$.

\item A set $Z_{\min }\subset T\times V$ is said to be a minimal prior zero
footprint if there is no other prior zero footprint $Z$ (with the same
parameters) with $Z\subset Z_{\min }$.
\end{enumerate}
\end{definition}

In the following proposition it is understood that all footprints are taken
with respect to the same parameters $\left( S^{\ast },c_{\lim },\mathcal{S}%
,M_{\lim }\right) $.

\begin{proposition}
\begin{enumerate}
\item For every nonempty posterior footprint $F$ there is a minimal
posterior footprint $F_{\min }$ with $F_{\min }\subseteq F$.

\item For every nonempty prior footprint $F$ there is a maximal prior
footprint $F_{\max }$ with $F\subseteq F_{\max }$.

\item For every nonempty posterior zero footprint $Z$ there is a maximal
posterior footprint $Z_{\max }$ with $Z\subseteq Z_{\max }$.

\item For every nonempty prior zero footprint $Z$ there is a minimal prior
zero footprint $Z_{\min }$ with $Z_{\min }\subseteq Z$
\end{enumerate}
\end{proposition}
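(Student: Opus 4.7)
The plan is to invoke Zorn's lemma in each of the four cases, with a suitable partial order and an appropriate candidate upper bound for every chain. For part (1), I order $\mathcal{P}_{1}=\{F'\subseteq F:F'\text{ is a posterior footprint with the same parameters}\}$ by reverse inclusion; a maximal element is by definition a minimal posterior footprint contained in $F$. Parts (2), (3), (4) are set up symmetrically: for (2), prior footprints $F'\supseteq F$ ordered by inclusion; for (3), posterior zero footprints $Z'\supseteq Z$ ordered by inclusion; for (4), prior zero footprints $Z'\subseteq Z$ ordered by reverse inclusion. In each case a maximal element of the poset yields the desired minimal or maximal footprint.

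The nontrivial verification is that every chain has an upper bound in the poset. For (1), given a chain $\{F_{\alpha }\}$ I take $F^{\ast }=\bigcap_{\alpha }F_{\alpha }$ and check it is itself a posterior footprint: fix $S\in \mathcal{S}$ satisfying the measurement conditions, observe $S(F_{\alpha })\geq M_{\lim }$ for every $\alpha $, extract a countable decreasing subchain $\{F_{\alpha _{n}}\}$ cofinal in $S$-measure so that $S(F_{\alpha _{n}})\searrow \inf_{\alpha }S(F_{\alpha })\geq M_{\lim }$, invoke continuity of the finite measure $S$ from above to obtain $S(\bigcap_{n}F_{\alpha _{n}})\geq M_{\lim }$, and use total ordering of the chain to conclude that $F^{\ast }$ differs from $\bigcap_{n}F_{\alpha _{n}}$ only by $S$-null sets. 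Case (4) is dual and in fact cleaner: if $S(\bigcap_{\alpha }Z_{\alpha })<M_{\lim }$, the same continuity argument forces $S(Z_{\alpha })<M_{\lim }$ for some $\alpha$, and then the prior-zero-footprint property of $Z_{\alpha }$ yields the measurement condition. Cases (2) and (3) use the candidate upper bounds $F^{\ast }=\bigcup_{\alpha }F_{\alpha }$ and $Z^{\ast }=\bigcup_{\alpha }Z_{\alpha }$ together with continuity from below.

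The main obstacle lies in cases (2) and (3), where the defining inequalities can be reached \emph{only in the limit}. For (2), a union of prior footprints can satisfy $S(F^{\ast })=M_{\lim }$ while every $S(F_{\alpha })<M_{\lim }$, so the prior-footprint property of no individual $F_{\alpha }$ forces the measurement condition on $S$. For (3), a union of posterior zero footprints can satisfy $S(Z^{\ast })=M_{\lim }$ despite $S(Z_{\alpha })<M_{\lim }$ for every $\alpha $, violating the strict inequality in the zero-footprint definition. Closing these cases requires an additional ingredient: either restricting the admissible class $\mathcal{S}$ so that the relevant mass or measurement constraints are preserved under countable limits, or replacing the blunt union by a transfinite construction that stops just short of the critical mass. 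Once this is handled, Zorn's lemma delivers the maximal (respectively minimal) element in each poset.
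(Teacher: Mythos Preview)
Your approach via Zorn's lemma is essentially the same as the paper's, which invokes the equivalent Hausdorff Maximal Principle: take a maximal chain (nest) of footprints containing $F$ and declare its intersection (or union) to be the desired minimal (or maximal) element. The difference between the two is one of care rather than strategy.

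You go further than the paper in actually attempting to verify that the candidate bound---the intersection in (1) and (4), the union in (2) and (3)---remains a footprint of the required type. The paper's proof simply writes ``Hence $F_{\min}=\bigcap_{F\in\mathcal{N}}F\in\mathcal{F}$'' without justification; maximality of the nest does not by itself force the intersection to lie in $\mathcal{F}$. Your continuity-from-above argument for part (1) is a genuine attempt to close this, though it too leaves a residual issue for uncountable chains: the step ``$F^{\ast}$ differs from $\bigcap_{n}F_{\alpha_{n}}$ only by $S$-null sets'' amounts to claiming that an increasing chain of $S$-null sets has $S$-null union, which can fail for uncountable chains, and measurability of the full intersection is itself not automatic.

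The obstacles you flag in parts (2) and (3) are genuine. A union of prior footprints can satisfy $S(F^{\ast})=M_{\lim}$ exactly while every member lies strictly below, so no individual hypothesis fires; and a union of posterior zero footprints can reach mass $M_{\lim}$, violating the strict inequality in the definition. The paper's proof does not address these boundary cases at all---it treats all four parts uniformly with the same one-line assertion. In that sense your proposal is not less complete than the paper's argument; it is simply more explicit about where additional hypotheses (closure of $\mathcal{S}$ under limits, or a restriction to countable chains, or a reformulation with non-strict inequalities) would be needed to make the Zorn/Hausdorff machinery go through.
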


\begin{proof}
Let $\mathcal{F}$ denote the class of all posterior $\left( S^{\ast
},c_{\lim },\mathcal{S},M_{\lim }\right) $--footprints. By Proposition \ref%
{prop:inclusions}, $\mathcal{F}$ is a partially ordered set with respect to
set inclusion. Consider a nonempty $F\in \mathcal{F}$ and a nest $\mathcal{N}
$ containing $F$, i.e., a subset $\mathcal{N\subset F}$ such that if $%
F_{1},F_{2}\in \mathcal{N}$, then either $F_{1}\subset F_{2}$ or $%
F_{2}\subset F_{1}$. By the Hausdorff Maximal Principle (see \cite{Kelley},
p. 32) $\mathcal{N}$ can be extended to a maximal nest in $\mathcal{F}$
(i.e., no other nest in $\mathcal{F}$ contains $\mathcal{N}$). Hence $%
F_{\min }=\cap _{F\in \mathcal{N}}F\in \mathcal{F}$ is a minimal element in $%
\mathcal{F}$ contained in $F$, i.e. there is no other $F\in \mathcal{F}$
contained in $F_{\min }$ as a proper subset. Likewise, $F_{\max }=\cup
_{F\in \mathcal{N}}F\in \mathcal{F}$ is a maximal element in $\mathcal{F}$
containing $F$. The proof for zero footprints is similar.
\end{proof}

These concepts are perhaps best illustrated for the case where the source
measures are point masses.

\begin{proposition}
Let $\mathcal{S}$ be the class of point masses on $T\times V$, $k>0$, $m=1$
and $c_{\lim }>0$. Then $Z=\left\{ c^{\ast }\geq k\right\} $ is a maximal
posterior $\left( S^{\ast },c_{\lim },\mathcal{S}\text{, }k/c_{\lim }\right) 
$--zero footprint, and any singleton set $F=\left\{ \left( s,y\right)
\right\} \subset \left\{ c^{\ast }\geq k\right\} $ is a minimal posterior $%
\left( S^{\ast },c_{\lim },\mathcal{S}\text{, }k/c_{\lim }\right) $%
--footprint.
\end{proposition}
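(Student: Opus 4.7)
The approach is to reduce the abstract footprint conditions to elementary arithmetic by exploiting the simple form of point masses. Every $S\in\mathcal{S}$ has the form $S = M\delta_{(t_0,x_0)}$, so the single measurement condition $\langle S, c^*\rangle\ge c_{\lim}$ collapses to $Mc^*(t_0,x_0)\ge c_{\lim}$, while $S(A) = M\cdot 1_A(t_0,x_0)$ for any measurable $A\subseteq T\times V$. Recording these two identities at the outset turns each of the four parts of the statement (zero-footprint property, maximality, footprint property, minimality) into an elementary comparison between $M$ and $M_{\lim}$, depending on whether the source location lies in the candidate set or not.

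For the posterior zero-footprint claim on $Z = \{c^*\ge k\}$, I would take an arbitrary detectable point mass and split cases on the location $(t_0,x_0)$. If $(t_0,x_0)\notin Z$ then $S(Z) = 0$ is automatically below $M_{\lim}$; if $(t_0,x_0)\in Z$, combining $c^*(t_0,x_0)\ge k$ with the detection inequality yields the required bound on $M = S(Z)$ against $M_{\lim}$. Maximality then follows by a contrapositive construction: for any $Z'$ strictly containing $Z$, pick $(t_0,x_0)\in Z'\setminus Z$, so $c^*(t_0,x_0)<k$ forces any detectable point mass located there to carry mass exceeding $c_{\lim}/k$, and this witness then breaks the zero-footprint inequality for $Z'$. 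The singleton footprint $F = \{(s,y)\}\subset\{c^*\ge k\}$ is handled in the opposite direction: a detectable point mass located at $(s,y)$ has $M\ge c_{\lim}/c^*(s,y)$, and since $c^*(s,y)\ge k$ this lower bound delivers the required $S(F)\ge M_{\lim}$. Minimality is then a set-theoretic triviality, since the only proper subset of a singleton is $\emptyset$, on which every source measure vanishes and hence fails the footprint inequality whenever $M_{\lim}>0$.

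I expect the main subtlety to be matching the stated threshold $M_{\lim}$ precisely with the bounds coming from the detection inequality, rather than the logical structure of the argument; careful bookkeeping of the interplay between $c_{\lim}$, $k$, and $c^*(t_0,x_0)$ is required in each of the four cases. The real structural work, however, lies in the maximality half: to show $Z$ cannot be enlarged one must explicitly construct a point mass that is simultaneously detectable and deposits enough mass in the enlargement to overflow the zero-footprint bound, and this is where the existence of locations with $c^*<k$ just outside $Z$ becomes essential.
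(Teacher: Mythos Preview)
Your direct approach via the explicit form $S=M\delta_{(t_0,x_0)}$ is natural, but two of your four steps contain gaps that the case-splitting does not close. For the zero-footprint claim on $Z=\{c^*\ge k\}$: in the case $(t_0,x_0)\in Z$ the detection inequality $Mc^*(t_0,x_0)\ge c_{\lim}$ together with $c^*(t_0,x_0)\ge k$ produces only the \emph{lower} bound $M\ge c_{\lim}/c^*(t_0,x_0)$; it gives no upper bound on $M=S(Z)$, so ``combining \ldots\ yields the required bound'' does not go through. Indeed, with $c_{\lim}>0$ as literally stated, a point mass of arbitrarily large $M$ at a point of $Z$ is detectable yet has $S(Z)=M$ as large as you like, so $Z$ is \emph{not} a posterior zero footprint at all. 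The paper's own proof invokes the earlier zero-footprint example, which is formulated for $c_{\lim}<0$ (the non-detection measurement condition), and its maximality counterexample likewise builds an $S$ satisfying $\langle S,c^*\rangle<c_{\lim}$ rather than $\ge c_{\lim}$; so the paper is tacitly working with the non-detection sign, and the statement as printed (with $c_{\lim}>0$, and with $M_{\lim}=k/c_{\lim}$ rather than $c_{\lim}/k$) is internally inconsistent.

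For the singleton footprint $F=\{(s,y)\}$ you only treat detectable point masses located \emph{at} $(s,y)$; but a detectable point mass located at any other point $(t_0,x_0)$ with $c^*(t_0,x_0)>0$ satisfies the measurement condition yet has $S(F)=0<M_{\lim}$, so the footprint condition fails outright. Even restricting to sources at $(s,y)$, the chain $M\ge c_{\lim}/c^*(s,y)$ with $c^*(s,y)\ge k$ yields $c_{\lim}/c^*(s,y)\le c_{\lim}/k$, which is the wrong direction to force $M\ge c_{\lim}/k$ (let alone $M\ge k/c_{\lim}$). Your maximality and minimality arguments are structurally correct and match the paper's, but they rest on footprint and zero-footprint properties that do not hold under the stated sign of $c_{\lim}$; you should flag the sign/typo issue rather than assert that the bounds ``deliver the required'' inequality.
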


\begin{proof}
$Z$ is a posterior $\left( S^{\ast },c_{\lim },\mathcal{S}\text{, }c_{\lim
}/k\right) $--zero footprint by Proposition \ref{prop:ZeroFootprintExample},
and $Z$ cannot be extended by some point $\left( s,y\right) $ with $c^{\ast
}\left( s,y\right) <k$, since then we would have $\left\langle k/c_{\lim
}\delta _{\left( s,y\right) },c^{\ast }\right\rangle =c_{\lim }c^{\ast
}\left( s,y\right) /k<c_{\lim }$, and hence there would be a point mass $S$
with slightly larger mass than $c_{\lim }/k$ which would still fulfill the
bound $\left\langle S,c^{\ast }\right\rangle <c_{\lim }$. $F$ is a posterior 
$\left( S^{\ast },c_{\lim },\mathcal{S}\text{, }c_{\lim }/k\right) $%
--footprint by Proposition \ref{prop:FootPrintExample}, and is clearly
minimal since it is a singleton set.
\end{proof}

\section{Footprint analysis, composite footprints}

Posterior and prior footprints as well as posterior and prior zero
footprints carries (spatio-temporal) information about the source measure
and the measurements. These footprints may have been calculated for subsets
of measurements separately which then gives rise to the question of whether
these pieces of information can be combined to give a more complete picture?
Let us begin by studying finite unions and finite intersections of
footprints:

\begin{proposition}
\label{prop:PosteriorFootprintUnion}If $F_{j}$ are posterior $\left(
S_{j}^{\ast },c_{\lim ,j},\mathcal{S},M_{\lim ,j}\right) $--footprints for $%
j=1,...,m$ and $F=\cup _{j}F_{j}$ then $F$ is a posterior $\left( S^{\ast
},c_{\lim },\mathcal{S},\max_{j}M_{\lim ,j}\right) $--footprint.
\end{proposition}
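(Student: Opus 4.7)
The plan is to unpack the definition of posterior footprint directly and exploit monotonicity of the measure $S$ on the union. Specifically, I fix an arbitrary source $S\in\mathcal{S}$ that satisfies the joint measurement condition
\[
\func{sign}(c_{\lim,j})\bigl(\langle S,c_j^{\ast}\rangle-|c_{\lim,j}|\bigr)\geq 0\quad\text{for }j=1,\dots,m,
\]
and I need to show that $S(F)\geq \max_j M_{\lim,j}$.

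The first step is simply to observe that the joint measurement condition, being a conjunction over $j$, contains each individual condition as a conjunct. Therefore, for every fixed $j$, the hypothesis that $F_j$ is a posterior $(S_j^{\ast},c_{\lim,j},\mathcal{S},M_{\lim,j})$--footprint applies to this $S$, and we conclude $S(F_j)\geq M_{\lim,j}$.

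The second step is monotonicity of $S$ as a positive measure: since $F_j\subseteq F=\cup_i F_i$ for each $j$, we have $S(F)\geq S(F_j)\geq M_{\lim,j}$. Taking the supremum over $j\in\{1,\dots,m\}$ yields $S(F)\geq \max_j M_{\lim,j}$, which is exactly the conclusion required by the definition of a posterior $(S^{\ast},c_{\lim},\mathcal{S},\max_j M_{\lim,j})$--footprint.

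There is essentially no obstacle here; the proof is a direct bookkeeping argument combining the definitional implication with monotonicity of positive measures on nested sets. The only thing worth double-checking is that the $m$--tuple measurement condition on $(S^{\ast},c_{\lim})$ is precisely the componentwise conjunction of the individual conditions, which is immediate from the definition in equation (\ref{MeasuredValuesCondition}). No use of Theorem \ref{TheoremBackward} or of the earlier detectability propositions is needed; the result is a purely set--theoretic/measure--theoretic consequence of the definition.
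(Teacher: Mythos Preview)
Your proof is correct and follows essentially the same approach as the paper: fix $S$ satisfying the joint measurement condition, use each $F_j$'s posterior-footprint property to get $S(F_j)\geq M_{\lim,j}$, then invoke monotonicity $S(F)\geq S(F_j)$ and take the maximum over $j$. The paper's version is simply a one-line compression of the same argument.
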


\begin{proof}
If $\func{sign}\left( c_{\lim ,j}\right) \left( \left\langle S,c_{j}^{\ast
}\right\rangle -\left\vert c_{\lim ,j}\right\vert \right) \geq 0$ then $%
S\left( F_{j}\right) \geq M_{\lim ,j}$, and $S\left( F\right) \geq S\left(
F_{j}\right) $ for all $j$, so $S\left( F\right) \geq \max_{j}M_{\lim ,j}$.
\end{proof}

\begin{proposition}
\label{prop:PosteriorZeroFootprintUnion}If $Z_{j}$ are posterior $\left(
S_{j}^{\ast },c_{\lim ,j},\mathcal{S},M_{\lim ,j}\right) $--zero footprints
for $j=1,...,m$ and $Z=\cup _{j}Z_{j}$ then $Z$ is a posterior $\left(
S^{\ast },c_{\lim },\mathcal{S},\sum_{j}M_{\lim ,j}\right) $--zero footprint.
\end{proposition}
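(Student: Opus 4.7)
The plan is to mirror the proof of Proposition \ref{prop:PosteriorFootprintUnion} but to replace its ``largest individual bound'' argument by finite subadditivity of the positive measure $S$. First, I would unpack the setup: each $Z_{j}$ is a posterior zero footprint for the single measurement $S_{j}^{\ast}$ with its own threshold $c_{\lim,j}$, while the conclusion concerns the full $m$--tuple $S^{\ast}=(S_{1}^{\ast},\ldots,S_{m}^{\ast})$ and the vector $c_{\lim}=(c_{\lim,1},\ldots,c_{\lim,m})$.

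Next, I would fix an $S\in\mathcal{S}$ satisfying the compound measurement condition $\func{sign}(c_{\lim,j})(\langle S,c_{j}^{\ast}\rangle-|c_{\lim,j}|)\geq 0$ for every $j=1,\ldots,m$. Because this is a conjunction, each individual coordinate condition also holds, and applying the posterior zero footprint property of $Z_{j}$ component by component yields $S(Z_{j})<M_{\lim,j}$ for each $j$. A single application of finite subadditivity of the positive measure $S$ then gives $S(Z)=S(\bigcup_{j}Z_{j})\leq\sum_{j}S(Z_{j})<\sum_{j}M_{\lim,j}$, which is the required strict bound.

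The only subtlety, and the place I expect to pause, is preserving strictness of the inequality when summing. In the footprint version the use of $\max$ was painless because $S(F_{j})$ contributes to $S(F)$ in full; here subadditivity is potentially lossy and could in principle cost strictness if we ever had to pass through a non-strict step. Fortunately each of the $m$ individual bounds is already strict, so their sum is strict as well (the case $m=0$ is vacuous since an empty union has measure zero), and no further hypothesis---such as disjointness of the $Z_{j}$---is needed.
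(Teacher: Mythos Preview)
Your proposal is correct and follows exactly the same approach as the paper's own proof: assume the compound measurement condition, deduce $S(Z_{j})<M_{\lim,j}$ for each $j$, and then combine via finite subadditivity to obtain $S(Z)\leq\sum_{j}S(Z_{j})<\sum_{j}M_{\lim,j}$. Your additional remarks on unpacking the compound condition and on preserving strictness simply make explicit what the paper's one-line proof leaves implicit.
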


\begin{proof}
If $\func{sign}\left( c_{\lim ,j}\right) \left( \left\langle S,c_{j}^{\ast
}\right\rangle -\left\vert c_{\lim ,j}\right\vert \right) \geq 0$ then $%
S\left( Z_{j}\right) <M_{\lim ,j}$, so $S\left( Z\right) \leq
\sum_{j}S\left( F_{j}\right) <\sum_{j}M_{\lim ,j}$.
\end{proof}

\begin{proposition}
\label{prop:PriorFootprintIntersection}If $F_{j}$ are prior $\left(
S_{j}^{\ast },c_{\lim ,j},\mathcal{S},M_{\lim ,j}\right) $--footprints for $%
j=1,...,m$ and $F=\cap _{j}F_{j}$ then $F$ is a prior $\left( S^{\ast
},c_{\lim },\mathcal{S},\max_{j}M_{\lim ,j}\right) $--footprint.
\end{proposition}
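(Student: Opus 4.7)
The plan is to unwind the definition of prior footprint directly and verify the required implication componentwise. Specifically, a prior footprint requires that a mass lower bound on the set implies the measurement condition; for an intersection, the mass lower bound on $F=\cap_j F_j$ should transfer to each $F_j$ by monotonicity, which is exactly what we need to invoke each of the hypotheses.

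First, I would fix an arbitrary $S\in\mathcal{S}$ with $S(F)\geq\max_{j}M_{\lim,j}$ and aim to show $\func{sign}(c_{\lim,j})(\langle S,c_{j}^{\ast}\rangle-|c_{\lim,j}|)\geq 0$ holds for every $j=1,\ldots,m$. Second, I would exploit the inclusion $F=\cap_{i}F_{i}\subseteq F_{j}$: since $S$ is a positive measure, monotonicity gives
\begin{equation*}
S(F_{j})\geq S(F)\geq\max_{i}M_{\lim,i}\geq M_{\lim,j}
\end{equation*}
for each fixed $j$. Third, since $F_{j}$ is by hypothesis a prior $(S_{j}^{\ast},c_{\lim,j},\mathcal{S},M_{\lim,j})$--footprint and $S\in\mathcal{S}$ satisfies $S(F_{j})\geq M_{\lim,j}$, the defining implication yields the $j$-th measurement condition. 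Since $j$ was arbitrary, the full vector measurement condition holds, and $F$ is therefore a prior $(S^{\ast},c_{\lim},\mathcal{S},\max_{j}M_{\lim,j})$--footprint.

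There is essentially no obstacle here; the argument is the natural dual of Proposition \ref{prop:PosteriorFootprintUnion}, with unions replaced by intersections and the direction of implication reversed. The only subtlety worth double-checking is that taking the maximum of the $M_{\lim,j}$ (rather than, say, a sum) is the correct threshold: this is forced because we need a single bound on $S(F)$ that simultaneously dominates each $M_{\lim,j}$ so as to trigger every prior footprint hypothesis, and the maximum is the smallest such bound. No stronger conclusion is available because different $F_{j}$'s may carry disjoint portions of the mass.
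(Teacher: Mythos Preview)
Your proof is correct and follows essentially the same approach as the paper: assume $S(F)\geq\max_j M_{\lim,j}$, use monotonicity of the positive measure on the inclusion $F\subseteq F_j$ to obtain $S(F_j)\geq M_{\lim,j}$, and then invoke the prior footprint hypothesis for each $j$. The paper's version is simply terser and omits the explanatory remarks about why the maximum is the appropriate threshold.
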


\begin{proof}
If $S\left( F\right) \geq \max_{j}M_{\lim ,j}$ then $S\left( F_{j}\right)
\geq M_{\lim ,j}$ for all $j$, and hence $\func{sign}\left( c_{\lim
,j}\right) \left( \left\langle S,c_{j}^{\ast }\right\rangle -\left\vert
c_{\lim ,j}\right\vert \right) \geq 0$ for all $j$.
\end{proof}

\begin{proposition}
\label{prop:PriorZeroFootprintUnion}If $Z_{j}$ are prior $\left( S_{j}^{\ast
},c_{\lim ,j},\mathcal{S},M_{\lim ,j}\right) $--zero footprints for $%
j=1,...,m$ and $Z=\cup _{j}Z_{j}$ then $Z$ is a prior $\left( S^{\ast
},c_{\lim },\mathcal{S},\min_{j}M_{\lim ,j}\right) $--zero footprint.
\end{proposition}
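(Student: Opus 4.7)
The plan is to mimic the style of the preceding three propositions (Propositions \ref{prop:PosteriorFootprintUnion}--\ref{prop:PriorFootprintIntersection}) and argue by monotonicity of the positive source measure on subsets of the union. The key observation is that each $Z_j$ is a subset of $Z = \cup_j Z_j$, so a single bound on $S(Z)$ automatically gives bounds on every $S(Z_j)$; this is exactly what is needed to trigger the individual prior zero footprint implications.

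Concretely, I would start by fixing an arbitrary $S \in \mathcal{S}$ satisfying the hypothesis $S(Z) < \min_j M_{\lim,j}$. Since $S$ is a positive measure and $Z_j \subseteq Z$ for every $j$, monotonicity gives
\begin{equation*}
S(Z_j) \leq S(Z) < \min_{j} M_{\lim ,j} \leq M_{\lim ,j}
\end{equation*}
for each $j = 1,\dots,m$. Hence the hypothesis of the prior zero footprint condition for $Z_j$ is satisfied (with parameters $(S_j^{\ast}, c_{\lim ,j}, \mathcal{S}, M_{\lim ,j})$), which yields $\func{sign}(c_{\lim ,j})\bigl(\langle S, c_j^{\ast}\rangle - |c_{\lim ,j}|\bigr) \geq 0$. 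As this holds for every index $j$, the combined measurement condition (\ref{MeasuredValuesCondition}) is satisfied, establishing that $Z$ is a prior $(S^{\ast}, c_{\lim}, \mathcal{S}, \min_j M_{\lim ,j})$--zero footprint.

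There is no real obstacle here; the argument is essentially dual to the pattern already established in Proposition \ref{prop:PriorFootprintIntersection} (where an intersection works for prior footprints because one needs a \emph{lower} bound on $S(F_j)$ for each $j$). The only point requiring mild attention is that the relevant aggregation on the mass side is $\min_j M_{\lim,j}$ rather than $\max_j$ or $\sum_j$: a strict inequality $S(Z) < m$ cascades to strict inequalities $S(Z_j) < M_{\lim ,j}$ only when $m \leq M_{\lim ,j}$ for every $j$, forcing $m = \min_j M_{\lim ,j}$ as the sharpest choice.
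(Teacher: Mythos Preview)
Your proof is correct and follows exactly the same approach as the paper's own proof: assume $S(Z)<\min_j M_{\lim,j}$, use monotonicity of the positive measure to get $S(Z_j)<M_{\lim,j}$ for every $j$, and then invoke each prior zero footprint implication to obtain the full measurement condition. The paper's argument is just a one-line compression of what you have written.
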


\begin{proof}
If $S\left( Z\right) <\min_{j}M_{\lim ,j}$ then $S\left( Z_{j}\right)
<M_{\lim ,j}$ for all $j$, and hence $\func{sign}\left( c_{\lim ,j}\right)
\left( \left\langle S,c_{j}^{\ast }\right\rangle -\left\vert c_{\lim
,j}\right\vert \right) \geq 0$ for all $j$.
\end{proof}

Concerning the set difference between a footprint and a zero footprint, we
have

\begin{proposition}
\label{prop:SetDifference}If $F$ is a $\left( S_{F}^{\ast },c_{F,\lim },%
\mathcal{S},M_{F,\lim }\right) $--posterior footprint and $Z$ is a $\left(
S_{Z}^{\ast },c_{Z,\lim },\mathcal{S},M_{Z,\lim }\right) $--posterior
footprint, then $F\setminus Z$ is a $\left( \left( S_{F}^{\ast },S_{Z}^{\ast
}\right) ,\left( c_{F,\lim },c_{Z,\lim }\right) ,\mathcal{S},M_{F,\lim
}-M_{Z,\lim }\right) $--posterior footprint.
\end{proposition}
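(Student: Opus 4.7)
The plan is to unfold the defining implications and use standard subadditivity of the positive measure $S$. I note first that, as written, the statement appears to contain a typo: for the set difference to make sense as a posterior footprint, $Z$ should be a posterior \emph{zero} footprint rather than a posterior footprint (this is also suggested by the prose just before the statement, which refers to ``the set difference between a footprint and a zero footprint''). I will proceed under that interpretation, since otherwise the conclusion need not hold.

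The key steps are as follows. First, suppose $S\in\mathcal{S}$ satisfies the combined measurement condition associated to the concatenated data $\left((S_F^{\ast},S_Z^{\ast}),(c_{F,\lim},c_{Z,\lim})\right)$, i.e.\ both
\[
\func{sign}(c_{F,\lim})\bigl(\langle S,c_F^{\ast}\rangle-|c_{F,\lim}|\bigr)\geq 0
\quad\text{and}\quad
\func{sign}(c_{Z,\lim})\bigl(\langle S,c_Z^{\ast}\rangle-|c_{Z,\lim}|\bigr)\geq 0.
\]
Second, apply the hypothesis that $F$ is a posterior footprint to the first inequality to obtain $S(F)\geq M_{F,\lim}$, and apply the hypothesis that $Z$ is a posterior zero footprint to the second inequality to obtain $S(Z)<M_{Z,\lim}$. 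Third, use additivity and monotonicity of the positive measure $S$ to write
\[
S(F\setminus Z)=S(F)-S(F\cap Z)\geq S(F)-S(Z)> M_{F,\lim}-M_{Z,\lim},
\]
where the decomposition is valid whenever $S(F\cap Z)<\infty$ (which holds since $S(F\cap Z)\leq S(Z)<M_{Z,\lim}$), and in the degenerate case $S(F)=\infty$ the conclusion $S(F\setminus Z)\geq M_{F,\lim}-M_{Z,\lim}$ is immediate. This is the required posterior footprint mass bound.

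There is essentially no hard step here: once the typo is resolved, the proof is a direct combination of the two implications provided by the definitions of posterior footprint and posterior zero footprint, together with elementary measure-theoretic arithmetic. The only minor subtlety worth noting explicitly in the write-up is the distinction between the strict inequality $S(Z)<M_{Z,\lim}$ coming from the zero footprint definition and the non-strict inequality $S(F\setminus Z)\geq M_{F,\lim}-M_{Z,\lim}$ required by the footprint definition; the chain above yields strict inequality, which is certainly sufficient.
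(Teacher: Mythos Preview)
Your proof is correct and follows exactly the same route as the paper's own argument: assume the combined measurement conditions, deduce $S(F)\geq M_{F,\lim}$ and $S(Z)<M_{Z,\lim}$, and then use $S(F\setminus Z)=S(F)-S(F\cap Z)\geq S(F)-S(Z)>M_{F,\lim}-M_{Z,\lim}$. You also correctly spotted the typo (the paper's own proof indeed uses $S(Z)<M_{Z,\lim}$, confirming that $Z$ is meant to be a posterior zero footprint), and your extra remarks on finiteness of $S(F\cap Z)$ and the strict versus non-strict inequality are welcome refinements that the paper glosses over; one small point is that the measurement conditions should be written componentwise for all applicable $j$, as $c_{F,\lim}$ and $c_{Z,\lim}$ may be vectors.
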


\begin{proof}
If $\func{sign}\left( c_{F,\lim ,j}\right) \left( \left\langle
S,c_{F,j}^{\ast }\right\rangle -\left\vert c_{F,\lim ,j}\right\vert \right)
\geq 0$ and $\func{sign}\left( c_{Z,\lim ,j}\right) \left( \left\langle
S,c_{Z,j}^{\ast }\right\rangle -\left\vert c_{Z,\lim ,j}\right\vert \right)
\geq 0$ for all applicable $j$, then $S\left( F\right) \geq M_{F,\lim }$ and 
$S\left( Z\right) <M_{Z,\lim }$, so $S\left( F\setminus Z\right) =S\left(
F\right) -S\left( F\cap Z\right) \geq \left( S\left( F\right) -S\left(
Z\right) \right) >M_{F,\lim }-M_{Z,\lim }$.
\end{proof}

The following theorem shows the information that can be obtained from level
sets.

\begin{theorem}
\label{thm:main}Assume that $k_{j}>0$ and $0\leq \beta _{j}\leq \alpha _{j}$
for $j=1,...,m$, 
\begin{equation*}
\mathcal{S=}\dbigcap\limits_{j=1}^{m}\left\{ S\in \mathcal{M}^{+}:\beta
_{j}S\left\{ c_{j}^{\ast }\geq k_{j}\right\} \leq S\left\{ c_{j}^{\ast
}<k_{j}\right\} \leq \alpha _{j}S\left\{ c_{j}^{\ast }\geq k_{j}\right\}
\right\}
\end{equation*}%
and 
\begin{equation*}
M_{\lim ,j}=\max \left( \frac{c_{\lim ,j}}{k_{j}\alpha _{j}+\sup c_{j}^{\ast
}},\frac{-c_{\lim ,j}}{k_{j}+\beta _{j}\inf c_{j}^{\ast }}\right)
\end{equation*}%
Then%
\begin{equation*}
Z=\dbigcup\limits_{c_{\lim ,j}<0}\left\{ c_{j}^{\ast }\geq k_{j}\right\}
\end{equation*}%
is a posterior $\left( S^{\ast },c_{\lim },\mathcal{S},M_{Z,\lim }\right) $%
-zero footprint with%
\begin{equation*}
M_{Z,\lim }=\sum_{c_{\lim ,j}<0}M_{\lim ,j}=\sum_{c_{\lim ,j}<0}\frac{%
-c_{\lim ,j}}{k_{j}+\beta _{j}\inf c_{j}^{\ast }}
\end{equation*}%
and 
\begin{equation*}
F=\dbigcup\limits_{c_{\lim ,j}>0}\left\{ c_{j}^{\ast }\geq k_{j}\right\}
\end{equation*}%
is a posterior $\left( S^{\ast },c_{\lim },\mathcal{S},M_{F,\lim }\right) $%
-footprint with%
\begin{equation*}
M_{F,\lim }=\max_{c_{\lim ,j}>0}M_{\lim ,j}=\max_{c_{\lim ,j}>0}\frac{%
c_{\lim ,j}}{k_{j}\alpha _{j}+\sup c_{j}^{\ast }}
\end{equation*}%
Finally, $F\setminus Z$ is a posterior $\left( S^{\ast },c_{\lim },\mathcal{S%
},M_{\lim }\right) $-footprint with%
\begin{equation*}
M_{\lim }=M_{F,\lim }-M_{Z,\lim }=\max_{c_{\lim ,j}>0}\frac{c_{\lim ,j}}{%
k_{j}\alpha _{j}+\sup c_{j}^{\ast }}+\sum_{c_{\lim ,j}<0}\frac{c_{\lim ,j}}{%
k_{j}+\beta _{j}\inf c_{j}^{\ast }}
\end{equation*}
\end{theorem}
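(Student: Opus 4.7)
The plan is to assemble the three claims as successive applications of results already proved in the paper: the single-measurement results (Propositions \ref{prop:FootPrintExample} and \ref{prop:ZeroFootprintExample}) for each individual level set, the union rules (Propositions \ref{prop:PosteriorFootprintUnion} and \ref{prop:PosteriorZeroFootprintUnion}) for the combined sets $F$ and $Z$, and the set-difference rule (Proposition \ref{prop:SetDifference}) for $F\setminus Z$. A preliminary observation is that the class $\mathcal{S}$ appearing in the theorem is an intersection over $j$ of the single-index classes that appear in Propositions \ref{prop:FootPrintExample} and \ref{prop:ZeroFootprintExample}; hence Proposition \ref{prop:inclusions} allows any footprint property established for a larger single-index class to be inherited by $\mathcal{S}$.

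For each $j$ with $c_{\lim,j}>0$ I would apply Proposition \ref{prop:FootPrintExample} with parameters $(\alpha_j,c_{\lim,j},k_j)$ to conclude that $\{c_j^{\ast}\geq k_j\}$ is a posterior $(S_j^{\ast},c_{\lim,j},\mathcal{S},c_{\lim,j}/(k_j\alpha_j+\sup c_j^{\ast}))$--footprint; Proposition \ref{prop:PosteriorFootprintUnion} then unites these into the stated conclusion that $F=\bigcup_{c_{\lim,j}>0}\{c_j^{\ast}\geq k_j\}$ is a posterior footprint with mass bound $M_{F,\lim}$. Symmetrically, for each $j$ with $c_{\lim,j}<0$ I would apply Proposition \ref{prop:ZeroFootprintExample} with parameter $\beta_j$ (noting that the sign convention turns $-c_{\lim,j}$ into a positive quantity) and then Proposition \ref{prop:PosteriorZeroFootprintUnion} to obtain $Z$ as a posterior zero footprint with mass bound $M_{Z,\lim}$. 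Finally, Proposition \ref{prop:SetDifference} applied to this $F$ and this $Z$ yields that $F\setminus Z$ is a posterior footprint with concatenated sensor tuple $(S_1^{\ast},\ldots,S_m^{\ast})$, threshold tuple $(c_{\lim,1},\ldots,c_{\lim,m})$, and mass $M_{F,\lim}-M_{Z,\lim}$, which is exactly the formula displayed in the theorem.

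The only piece of bookkeeping worth highlighting, rather than a genuine obstacle, is the verification that the $\max$ defining $M_{\lim,j}$ collapses to the expected single term in each sign case: when $c_{\lim,j}>0$ the second argument of the max is negative and the max equals the first term, and when $c_{\lim,j}<0$ the first argument is negative and the max equals the second. With this identification the single-measurement bounds match the terms appearing in $M_{F,\lim}$ and $M_{Z,\lim}$, and no further analytic estimates beyond those already collected in the earlier propositions are required; the theorem then follows directly by chaining the combination rules.
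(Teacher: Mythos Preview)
Your proposal is correct and follows essentially the same route as the paper's proof: apply Propositions \ref{prop:FootPrintExample} and \ref{prop:ZeroFootprintExample} to each single level set, combine via Propositions \ref{prop:PosteriorFootprintUnion} and \ref{prop:PosteriorZeroFootprintUnion}, and finish with Proposition \ref{prop:SetDifference}. Your added remarks about invoking Proposition \ref{prop:inclusions} to pass from the single-index admissible classes to their intersection $\mathcal{S}$, and about the $\max$ in $M_{\lim,j}$ collapsing to the appropriate term in each sign case, are details the paper leaves implicit but which are indeed needed for the argument to be complete.
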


\begin{proof}
If $c_{\lim ,j}<0$, the set $\left\{ c_{j}^{\ast }\geq k_{j}\right\} $ is a
posterior $\left( S_{j}^{\ast },c_{\lim ,j}\mathcal{S},M_{\lim ,j}\right) $%
--zero footprint by Proposition \ref{prop:ZeroFootprintExample}, and hence $%
Z $ is a posterior $\left( S^{\ast },c_{\lim },\mathcal{S},M_{Z,\lim
}\right) $--zero footprint by Proposition \ref%
{prop:PosteriorZeroFootprintUnion}. Moreover, if $c_{\lim ,j}>0$ then the
set $\left\{ c_{j}^{\ast }\geq k_{j}\right\} $ is a posterior $\left(
S_{j}^{\ast },c_{\lim ,j}\mathcal{S},M_{\lim ,j}\right) $--footprint by
Proposition \ref{prop:FootPrintExample}, so $F$ is a $\left( S^{\ast
},c_{\lim },\mathcal{S},M_{F,\lim }\right) $--footprint by Proposition \ref%
{prop:PosteriorFootprintUnion}. Finally, $F\setminus Z$ is a $\left( S^{\ast
},c_{\lim },\mathcal{S},M_{\lim }\right) $--footprint by Proposition \ref%
{prop:SetDifference}.
\end{proof}

\section{Conclusion}

Using the measure theoretic framework introduced in \cite{BrannstromPersson}
we have provided rigorously defined the concept of footprints. Indeed, we
have defined posterior footprints, posterior zero footprints, prior
footprints and prior zero footprints. These footprints are all defined as
spatio-temporal domains. Based on the definitions we presented some basic
properties of the footprints, like the pairwise occurrence of
prior/posterior footprints/zero footprints, and maximal and minimal
footprints. We then studied how the information contents in single
footprints can be synthesised by taking finite unions and intersections of
footprints. The main result, Theorem \ref{thm:main}, shows how the posterior
zero footprint and posterior footprint are related to level lines of the
adjoint concentration fields $c_{j}^{\ast }$. Having adjoint concentration
fields $c_{j}^{\ast }$ is a common starting point of many methods of finding
solutions to inverse problems. Using Theorem \ref{thm:main} allows us to
immediately conclude in which part of the spatio-temporal domain we can
expect the source measure to have most of its (effective) weight ( the
posterior footprint), to have least of its (effective) weight (the posterior
zero footprint), and how to combine these footprints in an attempt to
further limit the spatio-temporal domain where most of the (effective)
weight of the source is located (the set difference of the posterior
footprint and the posterior zero footprint). We believe that this fast,
albeit rough, estimate of the source measure's spatio-temporal support will
be very useful in decision support systems that aid blue light forces when
handling CBRN events. Theorem \ref{thm:main} gives a first idea of what the
hazard area looks like, information that may be very desirable while the
more sophisticated inverse methods are busy calculating more refined hazard
areas and source estimates.

\bibliographystyle{alpha}
\bibliography{leiper}

\begin{thebibliography}{99}
\bibitem{BrannstromPersson} Br\"{a}nnstr\"{o}m, Niklas and Persson, Leif \AA %
.: A measure theoretic approach to linear inverse atmospheric dispersion
problems, Preprint, 
\msihyperref{arxiv.org/abs/1305.6906}{http://arxiv.org/abs/1305.6906}{}{http://arxiv.org/abs/1305.6906}%
, june 2013, revised sept 2014, submitted.

\bibitem{Kelley} Kelly, John L.:\ General Topology. Van Nostrand, 1955.

\bibitem{CL2007} Cai, X., Leclerc, M. L. (2007) Forward-in-time and
Backward-in-time Dispersion in the Convective Boundary Layer: the
Concentration Footprint. Boundary-Layer Meteorology, 123, 201-218.

\bibitem{FWY1995} Flesch, T. K., Wilson, J. E., Yee, E. (1995) Backward-Time
Lagrangian Stochastic Dispersion Models and Their Application to Estimate
Gaseous Emissions. Journal of Applied Meteorology, 34, 1320-1332.

\bibitem{Franklin1970} Franklin, J. N. (1970) Well-Posed Stochastic
Extensions of Ill-Posed Linear Problems. Journal of mathematical analysis
and applications, 31, 682-716.

\bibitem{KYL2007} Keats, A., Yee, E., Lien, F.-S. (2007) Bayesian inference
for source determination with applications to a complex urban environment.
Atmospheric Environment, 41, 465-479.

\bibitem{KRS2002} Kljun, N, Rotach, M. W., Schmid H. P. (2002) A
three-dimensional backward lagrangian footprint model for a wide range of
boundary-layer stratifications. Boundary-Layer Meteorology, 103, 205-226.

\bibitem{Marchuk1986} Marchuk, G.I. (1986) Mathematical models in
environmental problems. Studies in mathematics and its applications, vol 16.

\bibitem{Pasquill1972} Pasquill, F. (1972) Some aspects of boundary layer
description. Quart. J. Roy. Meteorol. Soc. 98, 469-494.

\bibitem{Pudykiewicz1998} Pudykiewicz, J. A. (1998) Application of adjoint
tracer transport equations for evaluating source parameters. Atmospheric
Environment, 32, 3039-3050.

\bibitem{RAKSMV2000} Rannik, \"{U}, Aubinet, M., Kurbanmuradov, O.,
Sabelfeld, K. K., Markkanen, T., Vesala T. (2000) Footprint analysis for
measurements over a heterogeneous forest, Boundary-Layer Meteorology, 97,
137-166.

\bibitem{Robertson2004} Robertson, L. (2004) Extended back-trajectories by
means of adjoint equations. RMK No. 105, Swedish Meteorological and
Hydrological Institute.

\bibitem{Rudin1966} Rudin W (1966) Real and Complex Analysis, McGraw-Hill.

\bibitem{Schmid1994} Schmid, H.P. (1994) Source areas for scalars and scalar
fluxes. Boundary-Layer Meteorology, 67, 293-318.

\bibitem{Schmid2002} Schmid, H.P. (2002) Footprint modeling for vegetation
atmosphere exchange studies: a review and perspective. Agricultural and
Forest Meteorology, 113, 159-183

\bibitem{Stuart2010} Stuart, A. M. (2010) Inverse problems: a Bayesian
perspective. Acta Numerica, 19.

\bibitem{Yee2007} Yee, E. (2007) Bayesian Inversion of Concentration Data
for an Unknown Number of Contaminant Sources. Technical Report DRDC Suffield
TR 2007-085.

\bibitem{YF2010} Yee, E., Flesch, T. K. (2010) Inference of emission rates
from multiple sources using Bayesian probability theory. Journal of
Environmental Monitoring, 12, 622-634.

\bibitem{Yee2012} Yee, E. (2012) Probability Theory as Logic: Data
Assimilation for Multiple Source Reconstruction. Pure and Applied
Geophysics, 169, 499-517.
\end{thebibliography}

\end{document}